\newcommand{\Bern}{\text{Bern}}
\newcommand{\Tr}{\text{Tr}}
\newcommand{\URE}{\text{URE}}
\newtheorem{theorem}{Theorem}
\newtheorem{lemma}{Lemma}
\newtheorem{assumption}{Assumption}
\newcommand{\real}{\mathbb{R}}
\newcommand{\bsX}{\boldsymbol{X}}
\newcommand{\rct}{\mathcal{R}}
\newcommand{\odb}{\mathcal{O}}
\newcommand{\err}{\varepsilon}
\newcommand{\tran}{\mathsf{T}}
\newcommand{\simiid}{\stackrel{\mathrm{iid}}\sim}
\newcommand{\dnorm}{\mathcal{N}}
\newcommand{\bstau}{\boldsymbol{\tau}}
\newcommand{\hbstau}{\boldsymbol{\hat \tau}}
\newcommand{\htaur}{\boldsymbol{\hat \tau_r}}
\newcommand{\htauo}{\boldsymbol{\hat \tau_o}}
\newcommand{\bst}{\boldsymbol{\theta}}
\newcommand{\bsxi}{\boldsymbol{\xi}}
\newcommand{\bsD}{\boldsymbol{D}}
\newcommand{\bsSig}{\boldsymbol{\Sigma}}
\newcommand{\hbsSig}{\boldsymbol{\hat \Sigma}}
\newcommand{\ident}{\boldsymbol{I}}
\newcommand{\bsdelt}{\boldsymbol{\delta}}
\newcommand{\bskap}{\boldsymbol{ \kappa}}
\title{Combining Observational and Experimental Datasets Using Shrinkage Estimators}
\author{Evan Rosenman, Guillaume Basse, Art Owen, Michael Baiocchi}
\begin{document}

\maketitle
\tableofcontents

\section{Introduction}

The modern era has yielded passive collection of massive observational datasets in areas such as e-commerce and electronic health. These data are promising and perilous. They may plausibly offer useful insights about causal effects of interest, but standard identification assumptions -- most notably, that all confounders are measured -- often fail to hold. Analysts must therefore exhibit caution before trusting causal estimates derived solely from these data. 

By contrast, a well-designed experiment will yield unbiased estimates of a causal effect, obviating the need for problematic statistical assumptions. But experimental data is  frequently expensive to obtain, and, as a consequence, often involves fewer units. Especially if one is interested in subgroup heterogeneity, this means experimental estimates can be imprecise. Hence, while observational data frequently has a ``bias problem," experimental data may suffer from a ``variance problem." 

In this paper, we consider combining data from observational and experimental sources, a problem of ``data fusion" \citep[see e.g.][]{Bareinboim7345}. In \cite{rosenman2018propensity}, we considered this problem under the assumption that all confounders were measured. This assumption -- challenging to defend in many practical problems -- ensures that all selection bias can be removed if we condition on the propensity score, the conditional probability of treatment given covariates. Practically, some bias will remain due to imperfect stratification, but it can be quantified. 

Here, we relax the assumption that all confounders are measured, meaning that residual bias of unknown magnitude can remain after stratifying on the propensity score. To derive combined estimators with desirable properties, we make use of the Stein Shrinkage literature. The classical James-Stein estimator, first introduced in \cite{stein1956inadmissibility}, considers shrinkage toward zero for a multivariate normal vector. But extensions, primarily discussed in \cite{green1991james} and \cite{green2005improved}, consider the combination of unbiased and biased estimators. 

Our contributions are threefold. First, we propose a generic procedure for deriving shrinkage estimators in this setting, making use of a generalized unbiased risk estimate. Second, we develop two new estimators, prove finite sample conditions under which they have lower risk than an estimator using only experimental data, and show that each achieves a notion of asymptotic optimality. Third, we draw connections between our approach and state-of-the-art results in sensitivity analysis, including proposing a method for evaluating the feasibility of our estimators. 

The remainder of this paper proceeds as follows. In Section \ref{sec:relatedLit}, we review literature on the data fusion problem and Stein Shrinkage. In Section \ref{sec:notationAndAssumptions}, we introduce notation and assumptions. In Section \ref{sec:proposedEstimators}, we develop our procedure and introduce our estimators. In Section \ref{sec:sens}, we discuss sensitivity analysis, and in Section \ref{sec:dataAnalysis}, we demonstrate the utility of our methods on a simulated dataset. The Appendix contains some of our longer proofs. Section \ref{sec:conclusions} concludes. 

\section{Related Literature}\label{sec:relatedLit}

Variants of the data fusion problem have a long history in the literature. In the middle of the twentieth century, \cite{campbell1957factors} introduced the concepts of ``internal validity" and ``external validity" to distinguish between challenges of treatment effect estimation and generalizability in quasi-experimental research. This paradigm was widely adopted among social scientists. The problem of extending causal findings across different domains is now known under the broader banner of ``transportability," which subsumes results from the meta-analysis and treatment effect heterogeneity literatures \citep{Bareinboim7345}. In this context, observational data is often used to examine whether causal effects from an experiment can be generalized to a target population \citep{cole2010generalizing, stuart2011use}. \cite{hartman2015sate} derived assumptions and placebo tests for identifying such population treatment effects from RCTs. 

There has been comparatively less work on incorporating causal effects computed using observational data, likely owing to concerns about introducing bias into the estimation. One approach is to assume unconfoundedness in the observational study, meaning that all variables affecting the treatment assignment and the outcome are measured. This is our approach in \cite{rosenman2018propensity}, and is also used in \cite{athey2019surrogate}. Some prior work has attempted to weaken this assumption, but typically introduces alternative assumptions. In \cite{kallus2018removing}, the authors assume that the hidden confounding has a parametric structure that can be modeled effectively. In \cite{peysakhovich2016combining}, it is assumed the bias preserves unit-level relative rank ordering.

Though they were not focused on questions of causality, \cite{green1991james} addressed the question of combining biased and unbiased estimators in the Empirical Bayes framework. They suppose they have two $K$-dimensional multivariate normal vectors $\htaur$ and $\htauo$ such that $\htaur$ has mean $\bst$ and $\htauo$ has mean $\bst - \bsxi$. The vectors are assumed homoscedastic with covariance matrices $\bsSig_r = \sigma^2 \ident_K$ and $\bsSig_o = v^2 \ident_K$. The goal is to estimate $\bst$ under the $L_2$ loss. The authors propose the estimator
\[ \htauo + \left( 1 - \frac{(K-2)\sigma^2}{||\htauo - \htaur ||^2} \right)_+ (\htaur - \htauo) \]
and show that it dominates $\htaur$ in terms of risk. Unsurprisingly, if $||\bsxi||^2$ is very small, the estimator underperforms a simple precision-weighted estimator. Yet, unlike the precision-weighted estimator, the proposed estimator has bounded risk as the biases grow. 

A key question is how to generalize these results to the heteroscedastic case. The follow-up, \cite{green2005improved}, proposes two estimators designed for this case. The first, 
\[ \bsdelt_1 = \htauo + \left( 1 - \frac{a}{(\htaur - \htaur)^\tran  \hbsSig_r^{-1} \left( \htaur - \htauo \right)}\right) \left(\htaur - \htauo \right) \] 
can be shown to dominate $\htaur$ (if $\hbsSig_r$ is perfectly estimated) under precision-weighted squared-error loss, (i.e. the squared coordinate residuals are scaled by the corresponding $\frac{1}{\sigma_{rk}^2}$ term). Under conventional squared error loss, they instead propose 
\[ \bsdelt_2 = \htauo + \left(\ident_K - \frac{a \hbsSig_r^{-1}}{(\htaur - \htauo)^\tran  \hbsSig_r^{-2} (\htaur - \htauo)} \right)(\htaur - \htauo ) \] 
The shrinkage parameter $a$ is optimized at $K-2$ for $\bsdelt_2$, while it depends on the value of $\bsxi$ for $\bsdelt_1$. Absent information about $\bsxi$, however, the authors default to using $a = K-2$ for this estimator as well.

\section{Notation, Assumptions, and Set-Up}\label{sec:notationAndAssumptions}

\subsection{Setup}\label{subsec:setup}

Suppose we have access to an observational study with units $i$ in indexing set $\mathcal{O}$ such that $|\mathcal{O}| = n_o$. We also have access to an RCT with $i \in \mathcal{R}$ and $|\mathcal{R}| = n_r$. We associate with each unit $i \in \mathcal{O} \cup \mathcal{R}$, a set of constants: 
\begin{itemize}
    \item Each unit has a pair of fixed, unseen potential outcomes $(Y_i(1), Y_i(0))$. These represent the unit's value for an outcome of interest in the presence or absence of treatment, respectively. 
    \item We measure a covariate vector $\bsX_i \in \mathbb{R}^p$, for each unit $i$. 
    \item Each unit also has a value for an \emph{unmeasured} covariate, denoted $U_i$. 
\end{itemize}
For units $i \in \mathcal{O}$, we also associate a propensity score, $p_i \in [0, 1]$, denoting the probability that the unit receives treatment. The propensity score is unknown to the researcher. 

We also associate with each $i \in \mathcal{O} \cup \mathcal{R}$ two random quantities: 
\begin{itemize}
    \item Each unit has a treatment indicator, $W_i$, where $W_i = 1$ indicates that the unit receives treatment and $W_i = 0$ indicates that the unit is untreated. 
    \item The value of $W_i$ defines the observed outcome, which is given by: 
    \[ Y_i = W_i Y_i(1) + (1-W_i) Y_i(0)\]
\end{itemize}

\subsection{Assumptions and Loss Function}

We suppose a stratification scheme is known, such that there are $k = 1, \dots, K$ strata and each has an associated population weight $w_1, \dots, w_K$. We treat the strata as non-random. We define indexing subsets $\mathcal{O}_k, \mathcal{R}_k$ (with cardinalities $n_{ok}, n_{rk}$) to identify units in each stratum. We make simple assumptions about the allocation to treatment in the two studies. 

\begin{assumption}[Allocations to Treatment]
For $i \in \mathcal{O}$, $W_i \sim \Bern(p_i)$ for $p_i = f(X_i, U_i)$, a function of the observed and unobserved covariates. For $i \in \mathcal{R}$, treatment is allocated via a simple random sample of size $n_{rkt}$ for $k = 1, \dots, K$.
\end{assumption}

Key to our analysis is the additional assumption: 
\begin{assumption}[Common Treatment Effect]\label{ass:cte}
The average causal effects are identical between the two populations i.e. for all $k = 1, \dots, K$: 
\[ \tau_k = \frac{1}{n_{ok}}\sum_{i \in \mathcal{O}_k} Y_i(1) - Y_i(0) = \frac{1}{n_{rk}} \sum_{i \in \mathcal{R}_k} Y_i(1) - Y_i(0) \] 
\end{assumption}
\noindent Denote a target of estimation $\bstau = \left(\tau_1, \dots, \tau_K \right)$. 

Assumption \ref{ass:cte} may be more or less plausible based on our experimental set-up. It may be the case, for example, that the experiment is a multi-arm trial involving $K$ different potential treatments. Within stratum $k$, all of the units would either receive treatment option $k$ or would be assigned to a control condition. This is the setting of \cite{dimmery2019shrinkage}. In this case, the observational dataset would be an agglomeration of observational datasets in which the analogous treatments were available to units. Assumption \ref{ass:cte} then becomes an assumption that each potential treatment has constant treatment effect such that the effects are transportable across the two datasets. 

Alternatively, we may assume there is only a single treatment but substantial heterogeneity across different subgroups. These subgroups may be known a priori, or they may be discovered by deploying a modern method used for heterogeneous treatment effect estimation \citep{wager2018estimation, hill2011bayesian}. The subgroups will then be defined by covariates. Assumption \ref{ass:cte} would thus depend on correct identification of the relevant subgroups. 

In either case, the assumption is a mathematical convenience, allowing us to define a shared target of estimation. It is unlikely to hold precisely in practice, as we are working with finite sample averages rather than population means. We might, alternatively, assume that the average treatment effects differ by no more than a factor of $O(1/n)$, or that they are equivalent only after some statistical adjustment. We do not explore such possibilities here, but consider them for future work. 

Under Assumption \ref{ass:cte}, we consider our aggregate loss. We are interested in the individual causal effects within each stratum $k$, rather than an overall ATE. In full generality, we define our loss function as: 
\[ \mathcal{L}(\bstau, \hbstau) =  \frac{1}{K} \sum_k d_k \left( \hat \tau_k - \tau_k \right)^2 \hspace{3mm} \text{ where } \hspace{3mm} d_k > 0, \hspace{1mm} \sum_k d_k = 1\] 
The stratum weights $d_k$  correspond to how much we ``care" about accuracy in that particular stratum. Typically, we would want $d_k \approx w_k$, where $w_k$ is the population weight of stratum $k$ for a target population of interest. Lacking this, we can instead use the observational data to define a surrogate weight
\[ d_k = \frac{n_{ok}}{n_o} \,.\] 
In other words, we use the observational dataset frequencies to estimate the population frequencies of the strata, and then weight the strata based on these estimated population frequencies. We denote as $\bsD$ the diagonal matrix whose entries are given by the $d_k/K$, such that 
\[ \mathcal{L}(\hbstau, \bstau) =  \left(\hbstau - \bstau \right)^\tran \bsD \left(\hbstau - \bstau \right) \,.\] 

\subsection{Estimator Distributions}
We define the following estimators: 
\begin{align*}
    \hat \tau_{ok} &= \frac{\sum_{i \in \mathcal{O}_k} W_i Y_i}{\sum_{i \in \mathcal{O}_k} W_i } - \frac{\sum_{i \in \mathcal{O}_k} (1-W_i)Y_i}{\sum_{i \in \mathcal{O}_k} (1-W_i)} \\
    \hat \tau_{rk} &= \frac{\sum_{i \in \mathcal{R}_k} W_i Y_i}{n_{rkt}} - \frac{\sum_{i \in \mathcal{R}_k} (1-W_i)Y_i}{n_{rkc}} 
\end{align*}
where $n_{rkc} = n_{rk} - n_{rkt}$. Denote $\htauo = \left( \hat \tau_{o1}, \dots, \hat \tau_{oK} \right)$ and $\htaur$ analogously.

Per the discussion in Section \ref{subsec:setup}, we operate in the randomization framework, meaning that potential outcomes are fixed and the only random quantity is the treatment assignment. We assume sufficient sample sizes and regularity conditions such that a Central Limit Theorem holds for $\htaur$. For more details on the technical conditions for this result, see \cite{li2017general}. Hence, we have approximately $\htaur \sim \dnorm \left(\bstau, \bsSig_r \right)$. 

We need not make assumptions about the distribution about $\htauo$, though we denote its mean as $\bstau + \bsxi$, where $\bsxi$ represents a $K$-dimensional bias parameter. The covariance matrix is denoted $\bsSig_o$. The bias results from correlation between the potential outcomes $(Y_i(1), Y_i(0))$ and the propensity scores $p_i$ within each stratum $k$. Denote these stratum-specific correlations as $s_{tk}$ and $s_{ck}$ and the average propensity scores within each stratum as $\bar p_k$. Then we can use the Delta Method to observe
\[ \xi_k = \frac{s_{tk}}{\bar p_k} + \frac{s_{ck}}{1 - \bar p_k} + O\left( \frac{1}{n_{ok}}\right) \,.\] 
A full derivation can be found in \cite{rosenman2018propensity}. 

Our assumptions imply that $\bsSig_o$ and $\bsSig_r$ will be diagonal matrices. We denote the diagonal entries of $\bsSig_o$ as $\sigma_{o1}^2, \dots, \sigma_{oK}^2$ with analogous definitions for $\bsSig_r$. 

\section{Proposed Estimators}\label{sec:proposedEstimators}

\subsection{Preliminaries}

We begin with a mild generalization of a result from \cite{strawderman2003minimax}. 

\begin{theorem}[Estimator Risk]\label{thm:estRisk}
Suppose we have $\boldsymbol{Z} \sim \dnorm(\bst, \bsSig)$, random $\boldsymbol{Y}$, and $\mathcal{L}(\bst, \boldsymbol v) = (\boldsymbol v-\bst)^\tran \boldsymbol D(\boldsymbol v-\bst)$ where $\bsSig = \text{diag}(\sigma_1^2,, \dots, \sigma_{k}^2)$ and $\bsD = 1/K\cdot\text{diag}(d_1, \dots, d_K)$ is a diagonal weight matrix quantifying the relative importance of the $K$ components. Then for 
\[ \bskap(\boldsymbol Z, \boldsymbol Y) = \boldsymbol Z + \boldsymbol \Sigma \boldsymbol g(\boldsymbol Z, \boldsymbol Y) \] 
where $\boldsymbol g(\boldsymbol Z, \boldsymbol Y) $ is a function of $\boldsymbol Z$ and $\boldsymbol Y$ that is differentiable, satisfying $E(||\boldsymbol g||^2) < \infty$, we have 
\[ R(\bst, \bskap(\boldsymbol Z, \boldsymbol Y)) = E \left( \mathcal{L}(\bst, \bskap(\boldsymbol Z, \boldsymbol Y)) \right) =  \frac{1}{K}\left( \Tr\left( \bsSig \bsD \right) + E \left(\sum_{k = 1}^K \sigma_{k}^4 d_k  \left(g_k^2(\boldsymbol Z, \boldsymbol Y) + 2 \frac{\partial g_k(\boldsymbol Z, \boldsymbol Y)}{\partial Z_k} \right) \right)\right)\,. \] 
\end{theorem}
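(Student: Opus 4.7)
The plan is to expand the weighted quadratic loss coordinate by coordinate, exploit the diagonal structure of both $\bsSig$ and $\bsD$, and finish by applying Stein's identity one coordinate at a time. Writing $\bskap(\boldsymbol Z, \boldsymbol Y) - \bst = (\boldsymbol Z - \bst) + \bsSig\, \boldsymbol g(\boldsymbol Z, \boldsymbol Y)$ and using that $\bsSig, \bsD$ are diagonal, the loss decomposes as
\[ \mathcal{L}(\bst, \bskap) = \sum_{k=1}^K \frac{d_k}{K}\Bigl[(Z_k - \theta_k)^2 + 2\sigma_k^2 g_k(Z_k-\theta_k) + \sigma_k^4 g_k^2\Bigr]. \]
Taking expectations term by term, $E[(Z_k-\theta_k)^2] = \sigma_k^2$ reproduces the trace quantity $\Tr(\bsSig\bsD)$ in the statement, and the $g_k^2$ term lands directly in the formula. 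The entire content of the theorem is therefore located in the cross term $E[g_k(\boldsymbol Z, \boldsymbol Y)(Z_k - \theta_k)]$.

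For that cross term I would invoke the one-dimensional Stein identity: for $Z_k \sim \dnorm(\theta_k, \sigma_k^2)$ and $h$ absolutely continuous in its $k$-th argument with an integrable partial derivative,
\[ E\bigl[(Z_k - \theta_k)\, h(\boldsymbol Z)\bigr] = \sigma_k^2\, E\bigl[\partial h/\partial Z_k\bigr]. \]
Because $\boldsymbol g$ depends on the extra randomness $\boldsymbol Y$, I would first condition on $\boldsymbol Y$ — which in the paper's setting is independent of $\boldsymbol Z$, since $\htaur$ and $\htauo$ come from disjoint samples — apply Stein's lemma inside, and then take an outer expectation. This turns the cross contribution into $2\sigma_k^4\, E[\partial g_k/\partial Z_k]$. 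Collecting the three pieces and folding the common $d_k/K$ weights into the sum then yields exactly the displayed risk formula.

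The main technical obstacle is simply justifying the Stein identity in the form stated. The differentiability of $\boldsymbol g$ and the assumption $E\|\boldsymbol g\|^2 < \infty$, combined with the Gaussian tails of $\boldsymbol Z$, supply dominated-convergence bounds that let the integration by parts $\int (z - \theta) h(z)\,\phi_\sigma(z-\theta)\rd z = \sigma^2 \int h'(z)\,\phi_\sigma(z-\theta)\rd z$ be made rigorous and let the derivative pass inside the expectation. Beyond verifying these regularity conditions (and being careful with the conditional argument in $\boldsymbol Y$), the derivation is routine algebra. The diagonal structure of $\bsSig$ and $\bsD$ is what permits the risk to split into $K$ independent one-dimensional Stein calculations rather than forcing use of the general multivariate form, and it is also what prevents off-diagonal coupling terms from appearing in the final expression.
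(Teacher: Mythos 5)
Your proof is correct, and at the structural level it is the same as the paper's: both arguments hinge on conditioning on $\boldsymbol Y$ (valid because $\boldsymbol Y$ is independent of $\boldsymbol Z$, as you correctly flag --- the theorem statement says only ``random $\boldsymbol Y$'' and leaves this hypothesis implicit) and then handling the resulting deterministic-$\boldsymbol g$ problem, finishing with the tower rule. The one genuine difference is that where the paper simply cites Theorem~3.1 of Strawderman (2003) for the fixed-$\boldsymbol y$ case, you re-derive it: expand the quadratic loss coordinatewise using the diagonality of $\bsSig$ and $\bsD$, and apply the one-dimensional Stein identity to the cross term $E[(Z_k-\theta_k)g_k]$. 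This buys a self-contained proof and makes visible exactly where the regularity conditions (differentiability, $E\|\boldsymbol g\|^2<\infty$, integrability of $\partial g_k/\partial Z_k$) enter the integration by parts, at the cost of reproving a known result. One small point worth noting, which your coordinatewise expansion actually exposes: with $\bsD=\tfrac1K\mathrm{diag}(d_1,\dots,d_K)$ your computation gives $E[(\boldsymbol Z-\bst)^\tran\bsD(\boldsymbol Z-\bst)]=\Tr(\bsSig\bsD)$ directly, so the extra outer factor of $1/K$ in the displayed risk formula double-counts the $1/K$ already folded into $\bsD$; this is an inconsistency in the paper's statement (its later uses of the URE drop the outer $1/K$), not an error in your argument.
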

\begin{proof}
Fix a vector $\boldsymbol y$ and define $\boldsymbol g^{(\boldsymbol y)}(\boldsymbol Z) = g(\boldsymbol Z, \boldsymbol y)$ and $\bskap^{(\boldsymbol y)}(\boldsymbol Z) = \boldsymbol Z + \boldsymbol \Sigma \boldsymbol g^{(\boldsymbol y)}(\boldsymbol Z).$ Observe $\boldsymbol g^{(\boldsymbol y)}(\boldsymbol Z)$ is a differentiable function of $\boldsymbol Z$, $E(||\boldsymbol g^{(y)}||^2) < \infty$. By Theorem 3.1 in \cite{strawderman2003minimax}, we thus have 
\[ R(\bst, \bskap^{(\boldsymbol y)}( \boldsymbol Z))) = \frac{1}{K}\left(\Tr\left( \bsSig \bsD \right) + E \left(\sum_{k = 1}^K \sigma_{k}^4 d_k  \left(\left(g^{(\boldsymbol y)}_k( \boldsymbol Z)\right)^2 + 2 \frac{\partial g^{(\boldsymbol y)}_k(\boldsymbol Z)}{\partial Z_k} \right) \right)\right)\,. \] 
By the Tower Rule, we know
\begin{align*}
E(R(\bst, \bskap(\boldsymbol Z, \boldsymbol Y))) &= E(E(R(\bst, \bskap(\boldsymbol Z, \boldsymbol Y)) \mid \boldsymbol Y =  \boldsymbol y)) \\
&= E(E(R(\bst, \bskap(\boldsymbol Z, \boldsymbol y)) \mid \boldsymbol Y = \boldsymbol y)) \\
&= E \left(R(\bst, \bskap^{(\boldsymbol y)}(\boldsymbol Z))\right)\,,
\end{align*}
and the result follows.
\end{proof}

From Theorem \ref{thm:estRisk}, we can obtain a generalization of Stein's Unbiased Risk Estimate \citep{stein1981estimation} for our setting,
\[ \URE(\bst, \bskap(\boldsymbol Z, \boldsymbol Y)) = \frac{1}{K}\left( \Tr\left( \bsSig \bsD \right) + \sum_{k = 1}^K \sigma_{k}^4 d_k  \left(g_k^2(\boldsymbol Z, \boldsymbol Y) + 2 \frac{\partial \boldsymbol g_k(\boldsymbol Z, \boldsymbol Y)}{\partial Z_k} \right) \right)\,. \] 

Our procedure for deriving estimators will be based on this unbiased risk estimate. For each, we will follow these steps: 
\begin{enumerate}
\item Posit a structure for the shrinkage estimator 
\item Derive a functional form for the shrinkage factor by optimizing URE, assuming the shrinkage factors are known a priori. This will be our ``base" estimator. 
\item (Optional) Generate a ``corrected" version of the estimator that attempts to account for the fact that the shrinkage factors are estimated from the data. 
\end{enumerate}

\subsection{$\bskap_1$, Common Shrinkage Factor}

We consider shrinkage estimators which share a common shrinkage factor across components. Denote a generic estimator as 
\[ \bskap(\lambda, \htaur, \htauo) = \htaur - \lambda(\htaur - \htauo)\,,\]
where $\lambda$ is our common shrinkage factor. 

For our second step, we will select $\lambda$ by minimizing the unbiased risk estimate. This approach has substantial precedent in the literature \citep[see e.g.][]{li1985stein, xie2012sure}. Supposing $\lambda$ is fixed ahead of time, the unbiased risk estimate is
\begin{equation}\label{eq:URE.fixed.lambda}
\URE(\bstau, \bskap(\lambda, \htaur, \htauo)) =  \Tr\left( \bsSig_r \bsD \right) + \lambda^2 \left( \htauo - \htaur\right)^\tran  \bsD\left( \htauo - \htaur\right) - 2 \lambda \Tr(\bsSig_r \bsD)\,.
\end{equation} 
This expression is strictly convex in $\lambda$ as long as $\htaur \neq \htauo$. We seek to find
\[ \lambda_1^{\URE} = \min_{\lambda} \URE(\bstau, \bskap(\lambda, \htaur, \htauo)) \,.\] 
Simple calculus tells us the unbiased risk estimate achieves its minimum at
\[ \lambda_1^{\URE} =\frac{\Tr(\bsSig_r \bsD)}{\left( \htauo - \htaur\right)^\tran \bsD\left( \htauo - \htaur\right) } \,,\]
giving us the estimator
\[ \bskap_1 = \bskap(\lambda_1^{\URE}, \htaur, \htauo) = \htaur -  \frac{\Tr(\bsSig_r \bsD)}{(\htauo - \htaur)^\tran \bsD ( \htauo - \htaur) } \left( \htaur - \htauo \right) \,.\]

This estimator generalizes the estimator of \cite{green1991james} to the heteroscedastic, weighted-loss case, and the interpretation is similar. The oracle weighted-MSE-optimal shrinkage factor is 
\[ \lambda_{\text{opt}} = \frac{\Tr(\bsSig_r\bsD)}{\Tr(\bsSig_r\bsD) + \Tr(\bsSig_o \bsD) + \bsxi^\tran \bsD^2 \bsxi} \,.\] 
The denominator cannot be estimated from the data because the bias is unknown. But we observe that the denominator is precisely the expectation of $\left( \htauo - \htaur\right)^\tran \bsD\left( \htauo - \htaur\right)$, and so we substitute this value as our ``best guess." 

The following lemma gives us a testable condition under which $\bskap_1$ is strictly better than $\htaur$ in terms of risk.  
\begin{lemma}\label{lemma:delta1DominatesTauR}
Suppose $4 \max_k d_k \sigma_{rk}^2 \leq \sum_k d_k \sigma_{rk}^2$. Then $\bskap_1$ dominates $\htaur$ under our loss function. 
\end{lemma}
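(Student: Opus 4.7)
The plan is to bring $\bskap_1$ into the form required by Theorem~\ref{thm:estRisk} and then control the sign of the residual term. Setting $Z_k = \hat\tau_{rk}$, $Y_k = \hat\tau_{ok}$, $A = \Tr(\bsSig_r \bsD)$, and $V = (\htauo - \htaur)^\tran \bsD (\htauo - \htaur)$, we can write $\bskap_1 = \htaur + \bsSig_r \boldsymbol g(\htaur, \htauo)$ with
\[ g_k(\htaur, \htauo) = -\frac{A(Z_k - Y_k)}{V \sigma_{rk}^2}. \]
Before invoking the theorem, I would verify its hypotheses conditional on $\htauo = \boldsymbol y$: differentiability holds on $\{V > 0\}$, a set of full measure, and $E[\|\boldsymbol g\|^2] < \infty$ follows from the lower bound $V \ge (\min_k d_k/K)\|\htaur - \boldsymbol y\|^2$ together with the standard fact that $E[1/\|\htaur - \boldsymbol y\|^2] < \infty$ whenever $K \ge 3$. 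The lemma's hypothesis forces $K \ge 4$, so this is automatic, and the tower rule delivers the unconditional bound.

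Theorem~\ref{thm:estRisk} then reduces the claim $R(\bstau, \bskap_1) < R(\bstau, \htaur) = \Tr(\bsSig_r \bsD)$ to showing that $E\bigl[\sum_k \sigma_{rk}^4 d_k(g_k^2 + 2\, \partial g_k/\partial Z_k)\bigr] < 0$. Direct calculation gives $\sum_k \sigma_{rk}^4 d_k g_k^2 = KA^2/V$. The divergence is the delicate step: since $\lambda_1^{\URE} = A/V$ depends on $Z_k$ through $V$ with $\partial V/\partial Z_k = 2 d_k(Z_k - Y_k)/K$, the product rule and simplification yield $\sum_k 2\sigma_{rk}^4 d_k\, \partial g_k/\partial Z_k = -2KA^2/V + (4A/KV^2)\sum_k \sigma_{rk}^2 d_k^2(Z_k - Y_k)^2$, so the bracketed integrand equals
\[ -\frac{KA^2}{V} + \frac{4A}{KV^2}\sum_k \sigma_{rk}^2 d_k^2(Z_k - Y_k)^2. \]

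The finish is the elementary estimate $\sum_k a_k b_k \le (\max_k a_k) \sum_k b_k$ with $a_k = \sigma_{rk}^2 d_k$ and $b_k = d_k(Z_k - Y_k)^2$, which bounds the positive piece by $(4A/V)\max_k(\sigma_{rk}^2 d_k)$; the hypothesis $4\max_k d_k \sigma_{rk}^2 \le \sum_j d_j \sigma_{rj}^2 = KA$ then caps this by $KA^2/V$, so the bracketed integrand is pointwise nonpositive, and strictness in expectation comes for free whenever the $\max$-bound is not tight with positive probability (e.g.\ when the $\sigma_{rk}^2 d_k$ are not all equal, which is the generic case). The main obstacle is the divergence computation itself — correctly tracking how $\partial \lambda_1^{\URE}/\partial Z_k$ contributes, since that is what produces the cancellation $KA^2/V - 2KA^2/V = -KA^2/V$ that drives the whole argument; everything after that is routine bookkeeping.
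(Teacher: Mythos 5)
Your proof is correct and follows essentially the same route as the paper's: both apply Theorem~\ref{thm:estRisk} to express the risk gap of $\bskap_1$ over $\htaur$ as the expectation of $-KA^2/V$ plus a divergence correction, then bound that correction by pulling out $\max_k d_k\sigma_{rk}^2$ and invoking the hypothesis. Your version is somewhat more careful than the paper's (you verify the integrability/differentiability hypotheses of Theorem~\ref{thm:estRisk} and track the factors of $K$ consistently), but the decomposition and the key inequality are identical.
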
 
\begin{proof}
Applying Theorem  \ref{thm:estRisk}, we have
\begin{align*}
R\left(\bstau, \htaur \right) &- R \left( \bstau, \bskap(\lambda_1^{\URE}, \htaur, \htauo)  \right) = E \left( \URE(\bstau, \htaur) \right) - E \left( \URE(\bstau, \bskap(\lambda_1^{\URE}, \htaur, \htauo) ) \right) \\
&= E \left(  -\frac{\Tr(\bsSig_r \bsD)^2}{\left( \htauo - \htaur\right)^\tran \bsD\left( \htauo - \htaur\right)} + \frac{4 \Tr(\bsSig_r \bsD) \left( \htauo - \htaur\right)^\tran \bsD^2 \bsSig_r\left( \htauo - \htaur\right)}{\left(\left( \htauo - \htaur\right)^\tran \bsD\left( \htauo - \htaur\right)\right)^2} \right) \\
& \leq \Tr(\bsSig_r \bsD)  E \left(  -\frac{\Tr(\bsSig_r \bsD)}{\left( \htauo - \htaur\right)^\tran \bsD\left( \htauo - \htaur\right)} + \frac{4 \left(\max_k d_k \sigma_{rk}^2\right) \left( \left( \htauo - \htaur\right)^\tran \bsD \left( \htauo - \htaur\right)\right)}{\left(\left( \htauo - \htaur\right)^\tran \bsD\left( \htauo - \htaur\right)\right)^2} \right) \\
&= \Tr(\bsSig_r \bsD) E \left( \frac{4 \max_k \sigma_{rk}^2 - \Tr(\bsSig_r \bsD) }{\left( \htauo - \htaur\right)^\tran \bsD\left( \htauo - \htaur\right)}\right) 
\end{align*}
Under our condition, the numerator is nonpositive, and hence the risk difference is nonpositive. 
\end{proof}

Note that the condition used in Lemma \ref{lemma:delta1DominatesTauR} requires that our dimension be at least four in order to guarantee a reduction in risk. Hence, the required dimension is at least as large as that required for risk reduction when using the classical James-Stein estimator to shrink homoscedastic estimates toward their grand mean \citep{efron2012large}. In our setting, it means there must be a minimum of four strata -- and possibly more, if the variances and weights vary substantially across strata. In the multi-arm trial case, this means at least four distinct treatments; in the heterogeneous treatments effects case, this means at least four different subgroups for whom we believe causal effects differ. 

We now consider some improvements to our estimators. We can restrict our shrinkage factor to lie between 0 and 1, an improvement also applied in \cite{green1991james} and \cite{green2005improved} and based on results in \cite{baranchik1964multiple}. Some reorganization allows us to write the estimator as 
\[ \bskap_{1+} = \htauo + \left( 1 - \lambda_1^{\URE} \right)_+ \left( \htaur - \htauo \right) \,.\] 

This estimator possesses the following desirable property. 
\begin{theorem}[$\bskap_{1+}$ Asymptotic Risk]\label{thm:delta1asymptotics}
Under mild conditions, in the limit $K \to \infty$, $\bskap_{1+}$ has the lowest risk among all estimators with a shared shrinkage factor across components. 
\end{theorem}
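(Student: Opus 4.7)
\emph{Proof plan.} My plan is to establish asymptotic optimality via a SURE-style oracle argument. Every estimator in the relevant class has the form $\bskap(\lambda, \htaur, \htauo) = \htaur - \lambda(\htaur - \htauo)$, and a short expansion of the loss shows that its risk $R(\lambda) := E[\mathcal{L}(\bstau, \bskap(\lambda, \htaur, \htauo))]$ is a strictly convex quadratic in $\lambda$, uniquely minimized at the oracle $\lambda^{\ast}$ displayed just above the theorem, which lies in $(0,1)$. The target statement then becomes $R(\bstau, \bskap_{1+})/R(\lambda^{\ast}) \to 1$, which is exactly the assertion that no shared-shrinkage rule asymptotically beats $\bskap_{1+}$.

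The key algebraic observation is that both the realized loss $L(\lambda) := \mathcal{L}(\bstau, \bskap(\lambda, \htaur, \htauo))$ and $\URE(\lambda)$ are quadratic in $\lambda$ with identical $\lambda^2$-coefficient $(\htauo - \htaur)^\tran \bsD (\htauo - \htaur)$. Hence $L(\lambda) - \URE(\lambda)$ is linear in $\lambda$, with mean-zero coefficients built from quadratic forms in $A := \htaur - \bstau$ and $B := \htauo - \bstau$. Writing $\hat\lambda_{1+} = \min(1, \lambda_1^{\URE}) \in [0,1]$ for the factor actually used by $\bskap_{1+}$, and setting $\Delta := \sup_{\lambda \in [0,1]} |L(\lambda) - \URE(\lambda)|$, the fact that $\hat\lambda_{1+}$ minimizes $\URE$ over $[0,1]$ while $\lambda^{\ast} \in (0,1)$ gives the standard three-step chain, for any competing shared-shrinkage choice $\hat\lambda \in [0,1]$,
\[ L(\hat\lambda_{1+}) \;\leq\; \URE(\hat\lambda_{1+}) + \Delta \;\leq\; \URE(\hat\lambda) + \Delta \;\leq\; L(\hat\lambda) + 2\Delta \,, \]
so taking expectations reduces asymptotic optimality to showing $E[\Delta] = o(R(\lambda^{\ast}))$.

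Concentration of $\Delta$ is the substantive step. Because the two random coefficients of the linear-in-$\lambda$ polynomial $L - \URE$ are mean-zero weighted sums of $K$ independent per-stratum contributions, Cauchy--Schwarz bounds $E[\Delta]$ by quantities of order $\sqrt{\sum_k (d_k/K)^2 c_k}$ for constants $c_k$ governed by the fourth moments of $A_k$ and $B_k$, whereas $R(\lambda^{\ast}) \asymp \sum_k (d_k/K) \sigma_{rk}^2$. A Lindeberg-style non-degeneracy on the weights and variances (for instance $\max_k d_k / \sum_j d_j \to 0$ with $\sigma_{rk}^2, \sigma_{ok}^2$ uniformly bounded away from $0$ and $\infty$), combined with uniformly bounded per-stratum fourth moments of $\htaur$ and $\htauo$, then forces $E[\Delta]/R(\lambda^{\ast}) \to 0$. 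These are the ``mild conditions'' I expect in the statement.

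The main obstacle is isolating the cleanest such regularity conditions. The oracle chain itself is a routine SURE argument, but controlling the concentration requires moment bounds beyond the CLT used for $\htaur$ earlier in the paper, and a non-degeneracy condition ruling out a single stratum dominating either the weights or the variances. Once $E[\Delta] = o(R(\lambda^{\ast}))$ is secured, the oracle inequality shows that $\bskap_{1+}$ beats every competing shared-shrinkage estimator up to an $o(R(\lambda^{\ast}))$ additive slack, delivering the claimed asymptotic optimality.
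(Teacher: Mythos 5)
Your proposal follows essentially the same route as the paper: a three-term SURE oracle inequality (loss-minus-URE at the data-chosen factor, plus the nonpositive URE gap, plus URE-minus-loss at the competitor), reduced to showing that $\sup_{\lambda\in[0,1]}|\URE - \mathcal{L}|$ vanishes as $K\to\infty$, using the observation that $\mathcal{L}(\lambda)-\URE(\lambda)$ is linear in $\lambda$ with mean-zero random coefficients. The one genuine difference is the concentration step. The paper bounds the $\lambda$-linear term by a supremum over ordered weights $1\ge c_1\ge\dots\ge c_K\ge 0$, invokes Li's (1986) Lemma 2.1 to convert this to a maximum of partial sums, and controls that maximum via a martingale $L^p$ maximal inequality; you instead note that for a scalar $\lambda\in[0,1]$ the supremum of $|\lambda b|$ is just $|b|$ and bound its expectation by Cauchy--Schwarz using independence across strata. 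For a common shrinkage factor your elementary bound suffices, and the paper's heavier machinery (inherited from the Xie--Kou--Brown argument, where it is needed for monotone component-specific factors) is not required. Two further points of comparison: you aim for the relative statement $R(\bskap_{1+})/R(\lambda^{\ast})\to 1$, i.e.\ $E[\Delta]=o(R(\lambda^{\ast}))$, which is stronger than the paper's additive conclusion $\lim_{K\to\infty}\bigl(R(\bstau,\bskap(\lambda_1^{\URE},\htaur,\htauo))-R(\bstau,\bskap(\lambda,\htaur,\htauo))\bigr)\le 0$ (note that under the paper's normalization, with $\sum_k d_k=1$ and the extra $1/K$ in the loss, both risks are themselves $O(1/K)$, so the additive version is weak); and your regularity conditions (non-degenerate weights, uniformly bounded variances and fourth moments) play the same role as the paper's three $\limsup$ conditions on $\frac1K\sum_k d_k^2\sigma_{rk}^2\xi_k^2$, $\frac1K\sum_k d_k^2\sigma_{rk}^2\sigma_{ok}^2$, and $\frac1K\sum_k d_k^2\sigma_{rk}^4$, though you would need to spell out the fourth-moment bounds that the paper obtains implicitly from the normal approximation to $\htaur$.
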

\begin{proof}
See the first proof in the Appendix. 
\end{proof}

In our optional third step, we consider applying a correction factor to the estimator. As motivation, observe that the risk of $\bskap_{1}$is not obtained via the expectation of (\ref{eq:URE.fixed.lambda}) evaluated at $\lambda_1^{\URE}$. This is because $\lambda_1^{\URE}$ is not actually known a priori, but rather it is estimated from the data; hence, we pay an additional risk penalty. Accounting for this additional penalty, we observe that it is preferable to shrink by less than $\lambda_1^{\URE}$. 

We can modify our estimator by optimizing a scaling value $a$ applied to our shrinkage factor. Note that we could improve this process ad infinitum -- estimating correction factors from the data, and then seeking to correct for the penalty induced by using the data to estimate the correction factor. We choose to terminate at one iteration and compare performance in simulations and data analyses to follow. 

We observe
\begin{align*}
\URE(\bstau, \bskap(a\cdot\lambda_1^{\URE}, \htaur, \htauo) ) &= \Tr\left( \bsSig_r \bsD \right) + \frac{(a^2 - 2 a) \Tr(\bsSig_r \bsD)^2}{\left( \htauo - \htaur\right)^\tran \bsD\left( \htauo - \htaur\right)} + \\&\hspace{5mm} \frac{4a \Tr(\bsSig_r \bsD) \left( \htauo - \htaur\right)^\tran \bsD^2 \bsSig_r\left( \htauo - \htaur\right)}{\left(\left( \htauo - \htaur\right)^\tran \bsD\left( \htauo - \htaur\right)\right)^2}\,,
\end{align*}
and optimizing over $a$ yields
\[ a_1^{\star} = 1 - \frac{2\left( \htauo - \htaur\right)^\tran \bsD^2 \bsSig_r\left( \htauo - \htaur\right) }{\left( \htauo - \htaur\right)^\tran \bsD\left( \htauo - \htaur\right)}\cdot \frac{1}{\Tr(\bsSig_r \bsD)}\,. \] 
This yields the modified estimator
\[ \bskap_1^{\star}  = \htaur -  a_1^{\star}\frac{\Tr(\bsSig_r \bsD)}{(\htauo - \htaur)^\tran \bsD ( \htauo - \htaur) } \left( \htaur - \htauo \right) \,.\]

This ``corrected" estimator will not necessarily outperform $\bskap_{1}$ or $\bskap_{1+}$, because yet more factors are being estimated from the data. However, it will provably outperform $\bskap_{1}$ if the stratum variances and weights are sufficiently concentrated. 

\begin{lemma}\label{lemma:delta1Superiority}
$\bskap_1^{\star}$ has risk no greater than $\bskap_1$ if 
\[ \max_k d_k^2 \sigma_{rk}^4 \leq \frac{3}{2} \left(\min_k \sigma_{rk}^2 d_k\right)^2 \,.\] 
\end{lemma}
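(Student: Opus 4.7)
The approach is to use Theorem~\ref{thm:estRisk} to obtain the exact risks of both estimators, subtract, and reduce the resulting expression to a pointwise inequality in quadratic forms of $U := \htaur - \htauo$. Abbreviate $T = \Tr(\bsSig_r \bsD)$, $Q = U^\tran \bsD U$, and $R = U^\tran \bsD^2 \bsSig_r U$. The computation in the proof of Lemma~\ref{lemma:delta1DominatesTauR} already yields
\[ R(\bstau, \bskap_1) \;=\; T + E\left[-\frac{T^2}{Q} + \frac{4 T R}{Q^2}\right]. \]

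For $\bskap_1^\star$, collect the overall shrinkage coefficient as $\alpha := a_1^\star \lambda_1^\URE = T/Q - 2R/Q^2$ so that $\bskap_1^\star = \htaur + \bsSig_r \boldsymbol h$ with $h_k = -\alpha U_k / \sigma_{rk}^2$. Applying Theorem~\ref{thm:estRisk}, the quadratic part is $\sum_k \sigma_{rk}^4 (d_k/K)\, h_k^2 = \alpha^2 Q$, while the divergence term is more intricate because $\alpha$ depends on $\hat\tau_{rk}$ through both $Q$ and $R$. After tracking all the terms, a new quadratic form $S := \sum_k (d_k/K)^3 \sigma_{rk}^4 U_k^2$ will appear, and the algebra should collapse to
\[ R(\bstau, \bskap_1^\star) \;=\; T + E\left[-\frac{T^2}{Q} + \frac{4 T R}{Q^2} - \frac{12 R^2}{Q^3} + \frac{8 S}{Q^2}\right]. \]
Subtracting, $R(\bstau, \bskap_1) - R(\bstau, \bskap_1^\star) = E\bigl[(12 R^2 - 8 Q S)/Q^3\bigr]$, so it suffices to establish $R^2 \ge \tfrac{2}{3}\, Q S$ pointwise on $\{Q > 0\}$.

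To close the argument, I would reinterpret $R/Q$ and $S/Q$ probabilistically. Setting $a_k = (d_k/K)\sigma_{rk}^2$ and, on $\{Q > 0\}$, $p_k = (d_k/K)\, U_k^2 / Q$, the vector $p$ is a probability mass function on the strata with $R/Q = E_p[a]$ and $S/Q = E_p[a^2]$. The target becomes $(E_p[a])^2 \ge \tfrac{2}{3}\, E_p[a^2]$, equivalently $\Var_p(a) \le \tfrac{1}{2}(E_p[a])^2$. A sufficient condition uniform in $p$ is $(\max_k a_k)^2 \le \tfrac{3}{2}(\min_k a_k)^2$, because $E_p[a^2] \le \max_k a_k^2$ and $(E_p[a])^2 \ge (\min_k a_k)^2$. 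The common factor $1/K^2$ in $a_k$ cancels, producing exactly the stated hypothesis $\max_k d_k^2 \sigma_{rk}^4 \le \tfrac{3}{2}(\min_k d_k \sigma_{rk}^2)^2$.

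The main obstacle will be the divergence computation for $\bskap_1^\star$: since $\alpha$ depends on $\htaur$ through two nonlinear channels, several $R^2/Q^3$ contributions of opposite signs appear, and the coefficient $-12 R^2 / Q^3$ only emerges after combining the cancellation $\alpha^2 Q - 2\alpha T = -T^2/Q + 4 R^2/Q^3$ with the $-16 R^2/Q^3$ piece generated by differentiating $R$. Once that identity is in place, the probabilistic reinterpretation and the sufficient condition on the ratio of the largest to smallest $d_k \sigma_{rk}^2$ are both routine.
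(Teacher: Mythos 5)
Your proposal is correct and takes essentially the same route as the paper: both apply Theorem \ref{thm:estRisk} to obtain $R(\bstau,\bskap_1^{\star})-R(\bstau,\bskap_1)=E\left[8S/Q^{2}-12R^{2}/Q^{3}\right]$ with $Q=U^\tran\bsD U$, $R=U^\tran\bsD^{2}\bsSig_r U$, $S=U^\tran\bsD^{3}\bsSig_r^{2}U$, and then bound $S\le\max_k(d_k^{2}\sigma_{rk}^{4})\,Q$ and $R^{2}\ge(\min_k d_k\sigma_{rk}^{2})^{2}Q^{2}$ (your probability-mass-function reformulation is just a repackaging of these two bounds, with the $1/K$ factors cancelling exactly as you note). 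Your intermediate risk identity for $\bskap_1^{\star}$ checks out and is in fact the expression the paper's display intends — its middle line misprints the $8S/Q^{2}$ term as $8S^{2}/Q^{3}$ and carries a stray leading sign, but its final inequality agrees with what you derived.
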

\begin{proof}
Applying Theorem  \ref{thm:estRisk}, we have
\begin{align*}
R\left(\bstau, \bskap_1^{\star} \right) - R\left(\bstau, \bskap_1 \right) &= E \left( \URE(\bstau, \bskap_1^{\star}) \right) - E \left( \URE(\bstau, \bskap_1) \right)  \\
&= -E \left( 8\frac{\left( \left( \htauo - \htaur\right)^\tran \bsD^3 \bsSig_r^2 \left( \htauo - \htaur \right)\right)^2}{\left( \left( \htauo - \htaur\right)^\tran \bsD \left( \htauo - \htaur \right)\right)^3} - 12\frac{\left( \left( \htauo - \htaur\right)^\tran \bsD^2 \bsSig_r \left( \htauo - \htaur \right)\right)^2}{\left( \left( \htauo - \htaur\right)^\tran \bsD \left( \htauo - \htaur \right)\right)^3} \right) \\
& \leq  \left( 8 \max_k \sigma_{rk}^4 d_k^2 - 12 \left( \min_k \sigma_{rk}^2 d_k \right)^2 \right)E \left( \frac{1}{\left( \left( \htauo - \htaur\right)^\tran \bsD \left( \htauo - \htaur \right)\right)} \right)
\end{align*}
This term is negative as long as 
\[ 8\max_k d_k^2 \sigma_{rk}^4 \leq 12 \left(\min_k \sigma_{rk}^2 d_k\right)^2 \,,\]
which simplifies to our given condition.
\end{proof}

\subsection{$\bskap_2$, Variance-Weighted Shrinkage Factors}
We may instead want to choose shrinkage factors on a component-by-component basis. One heuristic is variance-weighted shrinkage: we shrink components by a factor proportional to variance. The estimator thus relies more heavily on the RCT estimate for entries $k$ for which $\sigma_{rk}^2$ is small, and more heavily on the observational estimate for entries $k$ for which $\sigma_{rk}^2$ is large. 

A generic estimator takes the form 
\[ \bskap(\lambda \bsSig_r, \htaur, \htauo)  = \htaur - \lambda \bsSig_r(\htaur - \htauo) \,.\]

We can follow the same procedure as in the prior section: minimize the unbiased risk estimate to determine the functional form of the estimator. We find that
\[ \lambda_2^{\URE} = \min_{\lambda} \URE(\bstau, \bskap(\lambda \bsSig_r, \htaur, \htauo) ) = \frac{\Tr(\bsSig_r^2 \bsD)}{(\htauo - \htaur)^\tran  \bsSig_r^2 \bsD (\htauo - \htaur)}\,, \] 
yielding the estimator
\[ \bskap_2 = \bskap(\lambda_2^{\URE}, \htaur, \htauo) = \htaur -  \frac{\Tr(\bsSig_r^2 \bsD)\bsSig_r}{(\htauo - \htaur)^\tran  \bsSig_r^2 \bsD (\htauo - \htaur) } \left( \htaur - \htauo \right) \,.\]
and its positive-part analogue
\[ \bskap_{2+} = \htauo + \left( \ident_K -  \frac{\Tr(\bsSig_r^2 \bsD) \bsSig_r}{(\htauo - \htaur)^\tran  \bsSig^2 \bsD (\htauo - \htaur)} \right)_+ \left( \htaur - \htauo \right) \,.\] 

These estimators have analogous finite sample and asymptotic properties to those described in the prior section. They are described in the Lemma and Theorem that follow. 
\begin{lemma}\label{lemma:delta2DominatesTauR}
Suppose $4 \max_k d_k^2 \sigma_{rk}^4 \leq \sum_k d_k^2 \sigma_{rk}^4$. Then $\bskap_2$ dominates $\htaur$ under our loss function. 
\end{lemma}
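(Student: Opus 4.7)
The plan is to mirror the proof of Lemma \ref{lemma:delta1DominatesTauR}, adjusting for the component-wise shrinkage structure of $\bskap_2$. First, I apply Theorem \ref{thm:estRisk} with $\boldsymbol{Z}=\htaur$, $\boldsymbol{Y}=\htauo$, and $\bsSig=\bsSig_r$. For the pure RCT estimator $\htaur$ the shrinkage function is identically zero, giving $R(\bstau,\htaur)=\Tr(\bsSig_r\bsD)$. For $\bskap_2$, writing it in the canonical form $\htaur+\bsSig_r\boldsymbol{g}$ identifies the $k$-th coordinate of the shrinkage function as $g_k=-\lambda_2^{\URE}(Z_k-Y_k)$, where $\lambda_2^{\URE}=C/V$ with $C=\Tr(\bsSig_r^2\bsD)$ and $V=(\htauo-\htaur)^\tran\bsSig_r^2\bsD(\htauo-\htaur)$.

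Next, I compute $\partial g_k/\partial Z_k$, propagating the $\boldsymbol{Z}$-dependence of $V$ through the chain rule; the result is a ``naive'' term $-C/V$ plus a correction proportional to $\bsD_{kk}\sigma_{rk}^4(Z_k-Y_k)^2/V^2$. Feeding $g_k^2$ and $\partial g_k/\partial Z_k$ into the URE formula from Theorem \ref{thm:estRisk} and summing over $k$ should produce the compact identity
\[ \URE(\bstau,\bskap_2)-\URE(\bstau,\htaur)=-\frac{C^2}{V}+\frac{4C\,(\htauo-\htaur)^\tran\bsD^2\bsSig_r^4(\htauo-\htaur)}{V^2}, \]
in which the leading $-C^2/V$ captures the first-order variance reduction and the second term is the penalty for estimating $\lambda_2^{\URE}$ from the data.

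To conclude, I bound the residual quadratic form by factoring one copy of $\bsD_{kk}\sigma_{rk}^4$ out of each summand:
\[ (\htauo-\htaur)^\tran\bsD^2\bsSig_r^4(\htauo-\htaur)=\sum_k(\bsD_{kk}\sigma_{rk}^4)\cdot\bsD_{kk}\sigma_{rk}^4(Z_k-Y_k)^2\le\max_k(\bsD_{kk}\sigma_{rk}^4)\cdot V. \]
Combining with the previous display and negating to pass to $R(\bstau,\htaur)-R(\bstau,\bskap_2)$ yields a lower bound of the form $E\!\left[(C/V)(C-4\max_k\bsD_{kk}\sigma_{rk}^4)\right]$. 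The hypothesis $4\max_k d_k^2\sigma_{rk}^4\le\sum_k d_k^2\sigma_{rk}^4$ is then exactly the statement that the random factor in the numerator has nonnegative sign, under the paper's convention for $\bsD$, which establishes dominance.

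The main obstacle will be the bookkeeping in the derivative step: one must track how the matrix $\bsSig_r^2\bsD$ appearing inside $\lambda_2^{\URE}$ interacts with the external $\bsSig_r$ factor in the canonical form $\htaur+\bsSig_r\boldsymbol{g}$. This is what elevates the exponents in the residual quadratic form from $\bsD^2\bsSig_r$ (as in Lemma \ref{lemma:delta1DominatesTauR}) to $\bsD^2\bsSig_r^4$, and correspondingly forces the dominance hypothesis to be phrased in terms of $\sigma_{rk}^4$ rather than $\sigma_{rk}^2$.
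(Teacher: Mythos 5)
Your approach is the one the paper itself intends: its proof of this lemma simply points back to the argument for Lemma \ref{lemma:delta1DominatesTauR}, and your URE computation for $\bskap_2$ is correct. Writing $C=\Tr(\bsSig_r^2\bsD)$ and $V=(\htauo-\htaur)^\tran\bsSig_r^2\bsD(\htauo-\htaur)$, both the identity
\[ E\bigl(\URE(\bstau,\bskap_2)\bigr)-E\bigl(\URE(\bstau,\htaur)\bigr)=E\left(-\frac{C^2}{V}+\frac{4C\,(\htauo-\htaur)^\tran\bsD^2\bsSig_r^4(\htauo-\htaur)}{V^2}\right) \]
and the bound $(\htauo-\htaur)^\tran\bsD^2\bsSig_r^4(\htauo-\htaur)\le\max_k(\bsD_{kk}\sigma_{rk}^4)\,V$ check out.

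The gap is in your final sentence. What your bound delivers is nonpositivity of $4\max_k\bsD_{kk}\sigma_{rk}^4-C$, i.e.\ the condition $4\max_k d_k\sigma_{rk}^4\le\sum_k d_k\sigma_{rk}^4$: the factor $1/K$ in $\bsD_{kk}=d_k/K$ cancels from both sides, so it cannot convert $d_k$ into $d_k^2$. This is not the stated hypothesis $4\max_k d_k^2\sigma_{rk}^4\le\sum_k d_k^2\sigma_{rk}^4$, and the two are not equivalent unless the $d_k$ are constant; the stated hypothesis does not even imply the one you need (take one $d_k$ very small with $\sigma_{rk}^4\propto d_k^{-2}$, so that $d_k^2\sigma_{rk}^4$ is balanced across strata while $d_k\sigma_{rk}^4$ is dominated by that single stratum). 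The honest conclusion is that your argument proves the lemma under the hypothesis $4\max_k d_k\sigma_{rk}^4\le\sum_k d_k\sigma_{rk}^4$ --- the exact analogue of Lemma \ref{lemma:delta1DominatesTauR}, with $\bsD\bsSig_r$ there replaced by $\bsD\bsSig_r^2$ here --- and that the $d_k^2$ in the statement is most plausibly an error in the paper. As written, however, you assert an equivalence that is false, so you should either correct the hypothesis or supply a separate argument for the $d_k^2$ version (none is apparent from this line of attack).
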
 
\begin{proof}
The result follows from the same argument used in the proof of Lemma \ref{lemma:delta2DominatesTauR}.
\end{proof}

\begin{theorem}[$\bskap_2$ Asymptotic Risk]\label{thm:delta2asymptotics}
Under mild conditions, in the limit $K \to \infty$, $\bskap_2$ has the lowest risk among all estimators with a variance-weighted shrinkage factor across components. 
\end{theorem}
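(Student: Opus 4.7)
The plan is to mirror the SURE-optimality argument used for Theorem 1 for $\bskap_{1+}$ (deferred to the Appendix), adapted to the variance-weighted class. The competing estimators are those of the form $\bskap(\lambda \bsSig_r, \htaur, \htauo) = \htaur - \lambda \bsSig_r(\htaur - \htauo)$ for some scalar $\lambda$, whether deterministic or data-driven; I want to show that $\lambda_2^{\URE}$ attains the oracle risk within this class as $K \to \infty$.

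First I would derive the true risk of $\bskap(\lambda \bsSig_r, \htaur, \htauo)$ as a function of a fixed $\lambda$. Using independence of $\htaur$ and $\htauo$, their means ($\bstau$ and $\bstau + \bsxi$), and diagonal covariances, a direct bias-variance expansion gives the quadratic
\[ R(\lambda) = \Tr(\bsSig_r \bsD) - 2\lambda \Tr(\bsSig_r^2 \bsD) + \lambda^2\left[\Tr\left(\bsSig_r^2 \bsD (\bsSig_r + \bsSig_o)\right) + \bsxi^\tran \bsSig_r^2 \bsD \bsxi \right], \]
uniquely minimized at
\[ \lambda^{\text{opt}} = \frac{\Tr(\bsSig_r^2 \bsD)}{\Tr\left(\bsSig_r^2 \bsD (\bsSig_r + \bsSig_o)\right) + \bsxi^\tran \bsSig_r^2 \bsD \bsxi}. \]
By contrast, $\URE(\lambda)$ is the same quadratic except that the coefficient on $\lambda^2$ is replaced by the random quantity $(\htauo-\htaur)^\tran \bsSig_r^2 \bsD (\htauo-\htaur)$, which has this deterministic quantity as its expectation.

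The key step is to upgrade agreement in expectation to uniform convergence in $\lambda$ on a bounded interval. Under mild conditions -- e.g., uniformly bounded $\sigma_{rk}^2$, $\sigma_{ok}^2$, $|\xi_k|$, and $d_k$ across $k$ and $K$, together with uniform fourth-moment control on the components of $\htaur$ and $\htauo$ -- the centered empirical average
\[ \frac{1}{K}\sum_{k=1}^K \sigma_{rk}^4 d_k\left\{(\htauo_k-\htaur_k)^2 - \E[(\htauo_k-\htaur_k)^2]\right\} \]
has variance $O(1/K)$ and so converges in probability to zero by Chebyshev's inequality. Since $\URE(\lambda) - R(\lambda)$ is precisely this centered sum multiplied by $\lambda^2$, uniform convergence over any compact $\lambda$-interval follows immediately. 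A standard argmax-continuous-mapping argument then yields $\lambda_2^{\URE} \to \lambda^{\text{opt}}$ in probability, so the risk of $\bskap_2$ converges to the oracle value $R(\lambda^{\text{opt}})$, which is a lower bound on the risk of any competitor in the class.

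The main obstacle is pinning down the ``mild conditions'' cleanly enough that uniform convergence holds, and extending the comparison from deterministic to data-driven $\lambda$. The latter requires a localization step ensuring that the random shrinkage factors of interest lie, with high probability, in a bounded interval on which URE converges uniformly to $R$. This is the same technical ingredient needed in the proof of Theorem 1, and I would reuse that machinery rather than reprove it here; the substantive content of Theorem 2 is the derivation of $R(\lambda)$ and $\lambda^{\text{opt}}$ above, together with the verification that $\URE(\lambda) - R(\lambda)$ collapses to zero uniformly at the rate dictated by the one centered sum displayed above.
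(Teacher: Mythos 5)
Your computation of the oracle risk $R(\lambda)$ and of $\lambda^{\mathrm{opt}}$ is correct, and the overall strategy (a SURE uniform-convergence argument in the spirit of Xie--Kou--Brown) is the one the paper uses. But there is a genuine gap in the middle: you compare $\URE(\lambda)$ to the \emph{deterministic} risk $R(\lambda)$, whereas the conclusion requires controlling the \emph{random loss} $\mathcal{L}(\lambda)$ at the data-driven $\lambda_2^{\URE}$. It is true that $\URE(\lambda)-R(\lambda)$ is a single centered quadratic form times $\lambda^2$, and Chebyshev kills it uniformly on compacts; but that statement says nothing about $\mathcal{L}(\lambda_2^{\URE})$, whose expectation is the risk of $\bskap_2$. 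Expanding
\[
\mathcal{L}(\bstau,\bskap(\lambda\bsSig_r,\htaur,\htauo)) \;=\; \frac{1}{K}\sum_k d_k(\hat\tau_{rk}-\tau_k)^2 \;-\; \frac{2\lambda}{K}\sum_k d_k\sigma_{rk}^2(\hat\tau_{rk}-\tau_k)(\hat\tau_{rk}-\hat\tau_{ok}) \;+\; \frac{\lambda^2}{K}\sum_k d_k\sigma_{rk}^4(\hat\tau_{rk}-\hat\tau_{ok})^2
\]
shows that $\URE(\lambda)-\mathcal{L}(\lambda)$ contains, besides the quadratic term you centered, a constant term $\frac{1}{K}\sum_k d_k(\sigma_{rk}^2-(\hat\tau_{rk}-\tau_k)^2)$ and a term \emph{linear} in $\lambda$ built from the cross-products $(\hat\tau_{rk}-\tau_k)(\hat\tau_{rk}-\hat\tau_{ok})$. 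The paper's Lemma~\ref{lemma:uniflossapprox2} proves $\sup_{0\le\lambda\le1}|\URE-\mathcal{L}|\to0$ in $L_2$, handling the linear term by recognizing its partial sums as a martingale and applying the $L^p$ maximal inequality (after Li's lemma reduces the supremum to a maximum over partial sums). With that uniform bound in hand, the telescoping inequality
\[
\mathcal{L}(\lambda_2^{\URE}) - \mathcal{L}(\lambda) \;\le\; 2\sup_{0\le\lambda'\le1}\bigl|\URE(\lambda')-\mathcal{L}(\lambda')\bigr|
\]
(valid because $\lambda_2^{\URE}$ minimizes $\URE$) gives the risk comparison after taking expectations, with no need to show $\lambda_2^{\URE}\to\lambda^{\mathrm{opt}}$ at all.

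Your route -- prove $\lambda_2^{\URE}\to\lambda^{\mathrm{opt}}$ in probability, then conclude the risk converges -- has a second soft spot even granting the consistency of $\lambda_2^{\URE}$: convergence in probability of the shrinkage factor does not by itself give $\E[\mathcal{L}(\lambda_2^{\URE})]\to R(\lambda^{\mathrm{opt}})$; you would need equicontinuity of the random loss in $\lambda$ plus uniform integrability, which again amounts to a uniform-in-$\lambda$ control of $\mathcal{L}-R$ -- exactly the ingredient you deferred to ``reuse that machinery.'' That deferred machinery is the substance of the proof, not a routine localization step, so the proposal as written does not close the argument.
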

\begin{proof}
See the second proof in the Appendix. 
\end{proof}

We can also apply the same method of estimating a scaling correction from the data. The scaling value is given by 
\[ a_2^{\star} = 1 - \frac{2(\htauo - \htaur)^\tran  \bsSig_r^4 \bsD^2 (\htauo - \htaur)}{(\htauo - \htaur)^\tran  \bsSig_r^2 \bsD (\htauo - \htaur)} \cdot \frac{1}{\Tr(\bsSig_r \bsD)}\,. \] 
which gives us the estimator 
\[ \bskap_{2+}^{\star} = \htauo + \left( \ident_K -  \frac{a_2^{\star}\Tr(\bsSig_r^2 \bsD) \bsSig_r}{(\htauo - \htaur)^\tran  \bsSig^2 \bsD (\htauo - \htaur)} \right)_+ \left( \htaur - \htauo \right) \,.\] 

\subsection{Practical Considerations}\label{sec:pracCon}

\subsubsection{Variance Estimation}

In practice, $\bsSig_r$ will not be known. As in \cite{green2005improved}, we suggest replacing it with an estimate, $\hbsSig_r$. Under Assumption \ref{ass:cte}, the estimator 
\[ \hat \sigma_{rk}^2 = \frac{1}{n_{rkt}} \sum_{i \in \mathcal{R}_k} W_i \left( Y_i - \bar{Y}_{rkt}\right)^2 + \frac{1}{n_{rkc}} \sum_{i \in \mathcal{R}_k} (1-W_i) \left( Y_i - \bar{Y}_{rkc}\right)^2 \]
where 
\[ \bar{Y}_{rkt} = \frac{1}{n_{rkt}} \sum_{i \in \mathcal{R}_k} W_i Y_i \hspace{3mm} \text{ and } \hspace{3mm} \bar{Y}_{rkc} = \frac{1}{n_{rkc}} \sum_{i \in \mathcal{R}_k} (1-W_i) Y_i\]
is unbiased for $\sigma_{rk}^2$. 

However, if there is heterogeneity in the treatment effect within strata, then $\hat \sigma_{rk}^2$ will be a biased estimator of $\sigma_{rk}^2$, and will tend to overestimate the variance \citep{Imbens:2015:CIS:2764565}. If we are using $\bskap_{1+}$ this bias will, in expectation, translate to \emph{more} shrinkage toward $\htauo$ from $\htaur$, because the shrinkage factor is linearly proportional to $\Tr(\bsSig_r \bsD)$. If we are using $\bskap_{2+}$, then we may over-shrink some components and under-shrink others. 

There are several possible ways to mitigate this issue. One is to choose smaller strata such that Assumption \ref{ass:cte} is likelier to hold. Another is to consider a variety of possible correlations between the potential outcomes in each stratum (where a correlation of 1 corresponds to Assumption \ref{ass:cte} being true, and lower values correspond to more heterogeneity in treatment effect), and then compute a ``menu" of possible shrinkage estimators based on the associated variance estimates. For details on computing the variance estimates under a choice of potential outcomes correlation, see Chapter 6 of \cite{Imbens:2015:CIS:2764565}. 

\subsubsection{Propensity Score Adjustment}\label{subsubsec:psa}

Because treatment is not randomized in the observational study, there will be selection bias. We do not assume unconfoundedness, but assume that some relevant covariates are measured. Hence, we can reduce (but not eliminate) bias by making use of the estimated propensity score. Because the observational study is assumed to be much larger than the RCT, adjusting by the estimated propensity score will often be good practice: any increase in variance may be compensated by a decrease in bias. 

Estimation of the propensity score will depend on the problem set-up. If the strata $k$ represent different treatments, then a different propensity model should be fit in each arm. If they represent subgroups with different treatment effects, then a single propensity model can be fit. In the former case, we will obtain a propensity score $\hat p_i = f_k(X_i)$ for each unit $i \in \mathcal{O}_k$, where $f_k(\cdot)$ may represent a logistic regression or other binary classification model. In the latter case, $\hat p_i = f(X_i)$ for each unit $i \in \mathcal{O}$. 

There are many ways in which to adjust for the propensity score in order to reduce bias, such as matching, stratification, and regression \citep[see e.g.][]{imbens2015causal}. We advocate stabilized inverse probability weighted (SIPW) estimation, where 
\[ \hat \tau_{ok} = \sum_{i \in \mathcal{O}_k} \frac{W_i Y_i}{\hat p_i} \left( \sum_{i \in \mathcal{O}_k} \frac{W_i}{\hat p_i}\right)^{-1} - \sum_{i \in \mathcal{O}_k} \frac{(1-W_i) Y_i}{1-\hat p_i} \left( \sum_{i \in \mathcal{O}_k} \frac{1-W_i}{1-\hat p_i}\right)^{-1}\,. \] 
This is simply the Horvitz-Thompson inverse probability weighted estimator with normalized weights. As we will see in the next section, the SIPW method will admit a relatively straightforward sensitivity analysis, allowing analysts to better quantify the amount of bias implied by the shrinkage estimator.   

\section{Sensitivity Analysis}\label{sec:sens}

In this section, we will consider sensitivity analysis when using $\lambda_{1+}$ with the estimation strategy described in Section \ref{subsubsec:psa}. 

\subsection{Set-Up}

Recall our interpretation of $\bskap_{1+}$ as estimating the weighted-MSE-optimal tradeoff factor $\lambda_{\text{opt}}$ from the data, where 
\[ \lambda_{\text{opt}} = \frac{\Tr(\bsSig_r\bsD)}{\Tr(\bsSig_r\bsD) + \Tr(\bsSig_o \bsD) + \bsxi^\tran \bsD^2 \bsxi} \,.\] 
The numerator $\Tr(\bsSig_r \bsD)$ is directly estimable, while we use the weighted norm of the discrepancy between $\htauo$ and $\htaur$ to estimate the denominator. 

Sensitivity analysis provides us an alternate approach to estimating $\lambda_{\text{opt}}$. We can posit a model for the level of violation of unconfoundedness in each stratum $k$; compute the worst-case bias and variance under this model; and plug these ``maximally pessimistic" estimates into the above formula. Such an approach would not make use of the parallel estimates of the causal effects to estimate the shrinkage factor. Rather, it would translate a set of untestable assumptions about the level of confounding into a conservative estimation strategy for trading off between $\htauo$ and $\htaur$.

This approach is straightforward in the case when SIPW estimation is used in the observational study, owing to recent work by \cite{zhao2019sensitivity}. The authors propose a marginal sensitivity model that extends the widely-used Rosenbaum sensitivity model \citep{rosenbaum1987sensitivity}. Crucially, this allows the degree of the confounding to be summarized by a single value, $\Gamma$, which bounds the odds ratio of the true treatment probability and the estimated treatment probability for all units in the observational study. $\Gamma = 1$ implies no unmeasured confounding, while larger values of $\Gamma$ imply greater deviations from this assumption. This quantity is very similar to the one used in Rosenbaum's work, lending it ready interpretability for researchers familiar with the Rosenbaum sensitivity model.

Under a given choice of $\Gamma$, Zhao and co-authors seek to derive valid confidence intervals for inverse probability weighting estimators of causal effects. They show that the worst-case bias under $\Gamma$ can be determined through a linear fractional programming problem. The variance contribution to the confidence intervals is estimated via a bootstrap.  

\subsection{Estimating Implied $\Gamma$}

When using $\bskap_{1+}$, our estimated shrinkage factor will be given by 
\[ \lambda_{1+} = 1-\left(1  - \frac{a_1^{\star}\Tr(\bsSig_r \bsD)}{\left( \htauo - \htaur\right)^\tran \bsD\left( \htauo - \htaur\right) } \right)_+\,.\] 

Once we posit a value for $\Gamma$, Zhao's method can be used to estimate the worst-case bias and variance when using SIPW to estimate the causal effect in the observational study within each stratum $k$. For each stratum $k$, we solve for the extrema -- the largest and smallest possible estimates that are consistent with the observational data from that stratum and the sensitivity model -- via the proposed linear fractional program. Subtracting out the point estimate of the causal effect and taking absolute values, this will give us two possible estimates of the bias of $\hat \tau_{ok}$: $\widehat{\text{Bias}}_{lk}(\Gamma)$, derived from the lower bound; and $\widehat{\text{Bias}}_{rk}(\Gamma)$, derived from the upper bound. 

We then draw repeated bootstrap replicates from the observational units within each stratum, and compute the extrema within each replicate. Variance is estimated by computing the variance across the replicates for each of the upper and lower bounds. We obtain estimates $\widehat{\text{Var}}_{lk}(\Gamma)$ and $\widehat{\text{Var}}_{rk}(\Gamma)$. Finally, we can choose
\begin{align*}
d_i^2 \widehat{\text{Bias}}_{k}(\Gamma) + d_i \widehat{\text{Var}_{k}}(\Gamma) &= \max \left(  d_i^2 \widehat{\text{Bias}}_{lk}(\Gamma) + d_i \widehat{\text{Var}_{lk}}(\Gamma),\right.\\&\left.  d_i^2 \widehat{\text{Bias}}_{rk}(\Gamma) + d_i \widehat{\text{Var}_{rk}}(\Gamma)\right) \,.
\end{align*}
These values can be directly plugged into the definition of $\lambda_{\text{opt}}$ to obtain an estimate $\lambda(\Gamma)$. 

This points to a simple algorithm for estimating the ``implied $\Gamma$" of our shrinkage estimate when using $\lambda_{1+}$, assuming that $\Gamma$ is shared across strata $k$: 
\begin{itemize}
\item Obtain $\lambda_{1+}$
\item Perform a binary search of $\Gamma$ values until $|\lambda_{1+} - \lambda(\Gamma)| < \epsilon$ for some small choice of $\epsilon$. Denote this value $\Gamma_{\text{imp}}$. 
\end{itemize}

There will be some randomness to the algorithm due to the bootstrap estimation of the variance, but with a sufficiently large number of replicates and reasonable choice of the tolerance parameter $\epsilon$, the algorithm should quickly converge. The resulting value $\Gamma_{\text{imp}}$ provides an interpretable notion of the bias for the analyst. If $\Gamma_{\text{imp}}$ lies within a range that matches the analyst's intuition, this provides license to proceed with the analysis.  

If it is unreasonably small -- say, $\Gamma_{\text{imp}} = 1.01$ -- then this signifies that $\bskap_{1+}$ is relying more heavily on $\htauo$ than the analyst thinks is reasonable. In such a case, the analyst has several options. She can simply use $\htaur$ and ignore the observational data. She can also essentially ``reverse" the process given above by choosing a value of $\Gamma$ that she considers reasonable, computing $\lambda(\Gamma)$, and then using the estimator
\[ \lambda(\Gamma) \htaur + (1 - \lambda(\Gamma)) \htauo \] 
to estimate the causal effects of interest. 

\section{Simulations}\label{sec:dataAnalysis}

\subsection{Simulation Set-Up}

We demonstrate the risk reduction for our proposed estimators under a variety of simulated scenarios. Our settings are partially patterned on those used in \cite{rosenman2018propensity}. 

In all of our simulations, our covariates $\bsX_i\in\real^3$ for $i \in \mathcal{O} \cup \mathcal{R}$. The observational study has $n_o=10{,}000$ subjects while the RCT has $n_r = 1{,}000$. On each new sampling of the covariates, we first sample a covariance matrix $\bsSig \in \mathbb{R}^{3 \times 3}$, such that each covariate has unit variance, and covariances are randomly 0 with $1/2$ probability, and $\pm 0.1$ with $1/4$ probability. This structure was used in \cite{rosenman2018propensity} because it is roughly consistent with the covariance structure present in the data from the Women's Health Initiative \citep{writing2002risks}. We then generate $\bsX_i\simiid\dnorm(0,\Sigma)$ for $i\in\odb \cup \rct$ and generate $U_i = 1/3 \cdot \mathbbm{1}^T \bsX_i + \eta _i$ where $\eta_i \sim \dnorm(0, 1/4)$. Hence, the unmeasured covariate has some stochastic contribution but is also correlated with the measured covariates.  

For the control condition, outcomes are generated as
\[ Y_i(0) = \bsX_i^\tran\boldsymbol{\beta} + U_i +  \err_i,\quad\text{for $\boldsymbol{\beta} = (1,1,1)^\tran$} \] 
for $i \in \mathcal{O} \cup \mathcal{R}$. The $\err_i$ are generated as IID $\dnorm(0,1)$ random variables.

We assume the treatment effect varies solely as a function of the second covariate (i.e. the second column of $\bsX_i)$, which we refer to as $\bsX_{i2}$. We will simulate under $K = 6$ (``few strata") and $K = 20$ (``many strata") conditions. We will also simulate with both equal- and variable-sized strata in order to include cases in which $\htaur$ is approximately homoscedastic versus heteroscedastic. In the heteroscedastic case, we suppose the first half of the strata contain approximately $2/(3K)$ units per stratum while the latter half contain approximately $4/(3K)$. The strata are defined based on the associated quantiles of the normal distribution, since we know $\bsX_{i2}$ will follow an approximately normal distribution for $i \in \mathcal{O}$ and $i \in \mathcal{R}$. We draw the treatment effects for each stratum according to a $\text{Uniform}(0, 1)$ distribution, and then linearly scale the effects in order to ensure that the Cohen's $d$ coefficient \citep{cohen1988statistical} equals 0.2 in the observational study. This corresponds to what Cohen calls a ``weak" effect. 

We sample the covariates and potential outcomes 25 times; for each choice, we sample the treatment assignments 20 times, for a total of 500 simulations. The treatment variables in the observational study are sampled as independent Bernoulli random variables with
\[p_i = \Pr(W_i = 1) = \frac{1}{1 + e^{-\boldsymbol{\gamma}^\tran\bsX_i + U_i}}.\] 
In order to induce a high level of selection bias in the observational study, we choose $\boldsymbol{\beta} = \boldsymbol{\gamma}$. Note that, because $p_i$ also depends on $U_i$, we cannot fully account for the selection bias by making use of inverse probability of treatment weighting. For units in the RCT, we randomly select half of the units within each stratum and assign them the treatment. 

We consider the performance of competing estimators. We include four of our proposed estimators, $\bskap_{1+}, \bskap_{1+}^{\star}, \bskap_{2+}$, and $\bskap_{2+}^{\star}$. We also consider Green and Strawderman's estimators, $\bsdelt_1$ and $\bsdelt_2$. Lastly, we compute an oracle estimator, which takes a convex combination of $\htaur$ and $\htauo$ weighted by the true optimal inverse-MSE weight, $\lambda_{\text{opt}}$. For each estimator, we estimate the risk via the average loss over the 500 simulations. We use $d_k = n_{ok}/n_o$ weighting scheme discussed earlier, which will yield $\bsD \approx 1/K \cdot \ident_K$ in the homoscedastic case but variable weights in the heteroscedastic case. Our main performance metric is the percent reduction in risk for these estimators as compared to the RCT-alone estimator $\htaur$. Larger risk reductions are preferred. Our estimators also outperform $\htauo$ in terms of risk across all the simulation conditions, though we do not directly report this risk reduction in the plots to follow. 

\subsection{Identical Observational and RCT Covariate Distributions}

We first consider the case in which the observational and RCT covariates are sampled from the same distribution, as described in the prior section. This is a somewhat ideal case, since it yields greater comparability between the datasets. To begin, we suppose that no effort is made to account for the (considerable) selection bias in the observational study. Results from these simulations are given in Figure \ref{fig:sim_Ideal_NoIPTW}.

We see immediately that all the estimators yield an improvement relative to use $\htaur$. The improvements are typically quite modest -- on the order of 1\% for our estimators in the 6-strata case, and 5\% in the 20-strata cases. This owes directly to the high bias in $\htauo$. For context, the risk of $\htauo$ was about 50 times that of the risk of $\htaur$ in the 6-strata simulations, and about 15 times larger in the 20-strata simulations. Hence, it is challenging to make use of these highly biased data to improve estimation, but we are still able to realize risk reductions.  

In the case of 6 strata, the best performing estimator is Green and Strawderman's $\bsdelt_1$ in the similar-size strata condition, while $\bskap_{1+}^{\star}$ does best in the variable-size condition. For 20 strata, $\bskap_{1+}$ is the winner in the similar-size condition and $\bskap_{1+}^{\star}$ in the variable-size condition. In general with these data, we observe that estimators making use of a single shrinkage factor across components (those with a ``1" subscript) tend to outperform those with component-level shrinkage factors. However, the only true laggard across the four conditions appears to be $\bsdelt_2$. 

Note also that we are only able to realize risk reductions about half that of the oracle when there are few strata. As dictated by the theory, we are able to get somewhat closer to oracle performance when there are more strata. 


\begin{figure}
\centering
\includegraphics[scale = 0.34]{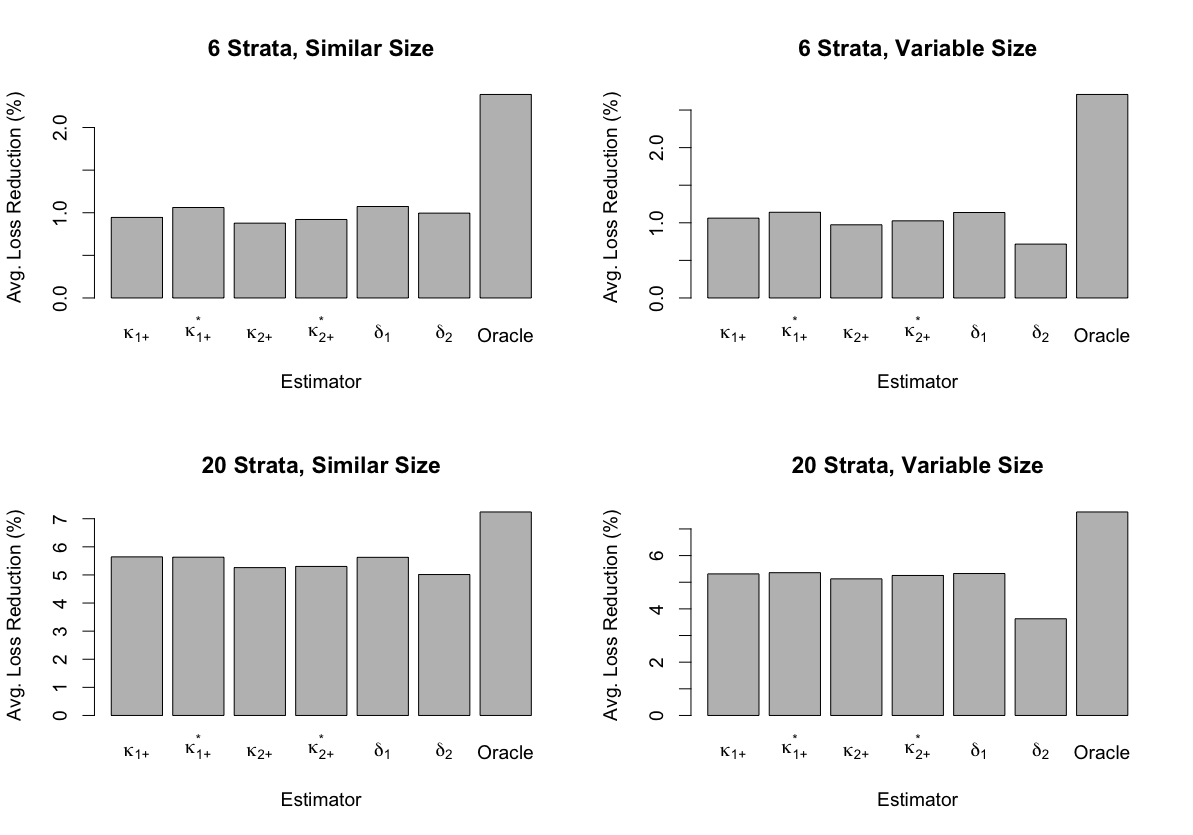}
\caption{\label{fig:sim_Ideal_NoIPTW}Percent reduction in risk relative to $\htaur$ for our proposed estimators, the Green and Strawderman estimators, and an oracle under four different conditions. Here, we assume $\htauo$ is computed without any adjustment for selection bias, yielding a highly biased estimator.}
\end{figure}

Much more substantial risk reductions are possible if we are able to reduce the bias of $\htauo$. Hence, we compute the same simulations but alter the estimation strategy in the observational dataset by using stabilized inverse probability of treatment weighting in each stratum, as described in Section \ref{subsubsec:psa}. Though we cannot remove all of the bias due to the influence of the unmeasured confounder, we can remove a large portion. The results are given in Figure \ref{fig:sim_Ideal_IPTW}. 

\begin{figure}
\centering
\includegraphics[scale = 0.34]{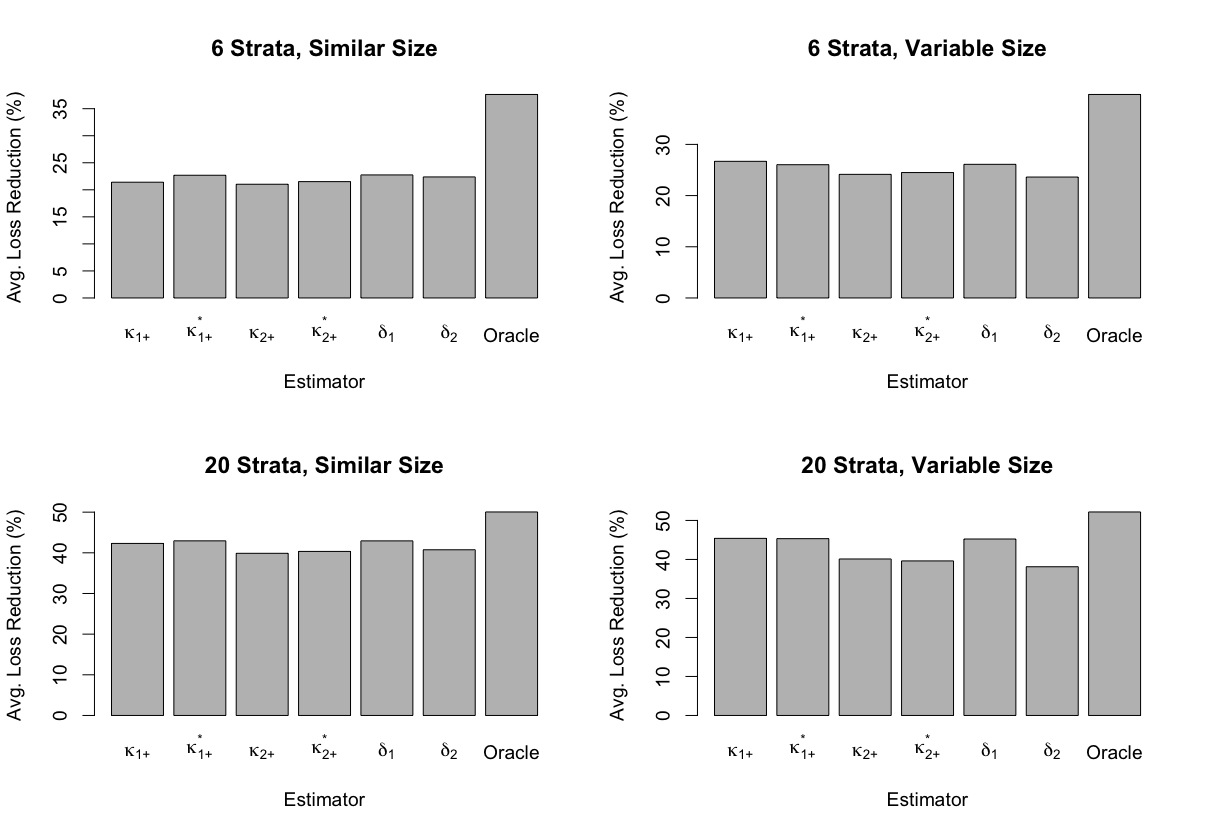}
\caption{\label{fig:sim_Ideal_IPTW} Percent reduction in risk relative to $\htaur$ for our proposed estimators, the Green and Strawderman estimators, and an oracle under four different conditions. Here, we assume $\htauo$ is computed by stabilized inverse probability of treatment weighting, such that some of the selection bias is removed.}
\end{figure}

First, we note that the risk reductions are much larger in magnitude -- on the order of 25\% for our estimators in the 6-strata case, and 40\% in the 20-strata case. This owes directly to the bias reduction in $\htauo$, whose risk is only approximately 40\% higher than that of $\htaur$ in the 6-strata case and almost identical to that of $\htaur$ in the 20-strata case. 

Green and Strawderman's $\bsdelt_1$ does best in the 6-strata, similar-size condition, while our estimators do best in the remaining scenarios: $\bskap_{1+}^{\star}$ does best in the 20-strata, similar-size condition and $\bskap_{1+}$ does best in the two variable-size conditions. Again, we see that when there are more strata, estimators using a single shrinkage factor tend to do better their counterparts using distinctive shrinkage factors across components. Moreover, we are able to get quite close to oracle performance, especially when using 20 strata. 

\subsection{Differing Observational and RCT Covariate Distributions}

We also consider the case where the distributions of the covariates differ between the observational and experimental studies. To induce the discrepancy, we first sample the mean vector for the observational covariates within each of the outer simulation loops, where each of the three entries is drawn from a $\text{Uniform}(-1/2, 1/2)$ distribution. This yields a mean vector $\mu_o$, and we then sample $\bsX_i\simiid\dnorm(\mu_o,\Sigma)$ for $i \in \mathcal{O}$. The RCT covariates are still sampled as $\bsX_i \simiid \dnorm(0, \Sigma)$. The strata are again defined by $\bsX_{i2}$, with the same quantiles used as in the prior case. In practice, this means we no longer have $\bsSig_o \approx \bsSig_r$ in any of the conditions. 

Results without stabilized IPW adjustment for $\htauo$ are given in Figure \ref{fig:sim_Differential_NoIPTW}. In general, all estimators have degraded somewhat in performance relative to the oracle. The Green and Strawderman estimators tend to do better in this setting: $\bsdelt_1$ outperforms in the first three conditions, while $\bskap_{2+}^{\star}$ does best in the 20-strata, variable-size condition. 



\begin{figure}
\centering
\includegraphics[scale = 0.34]{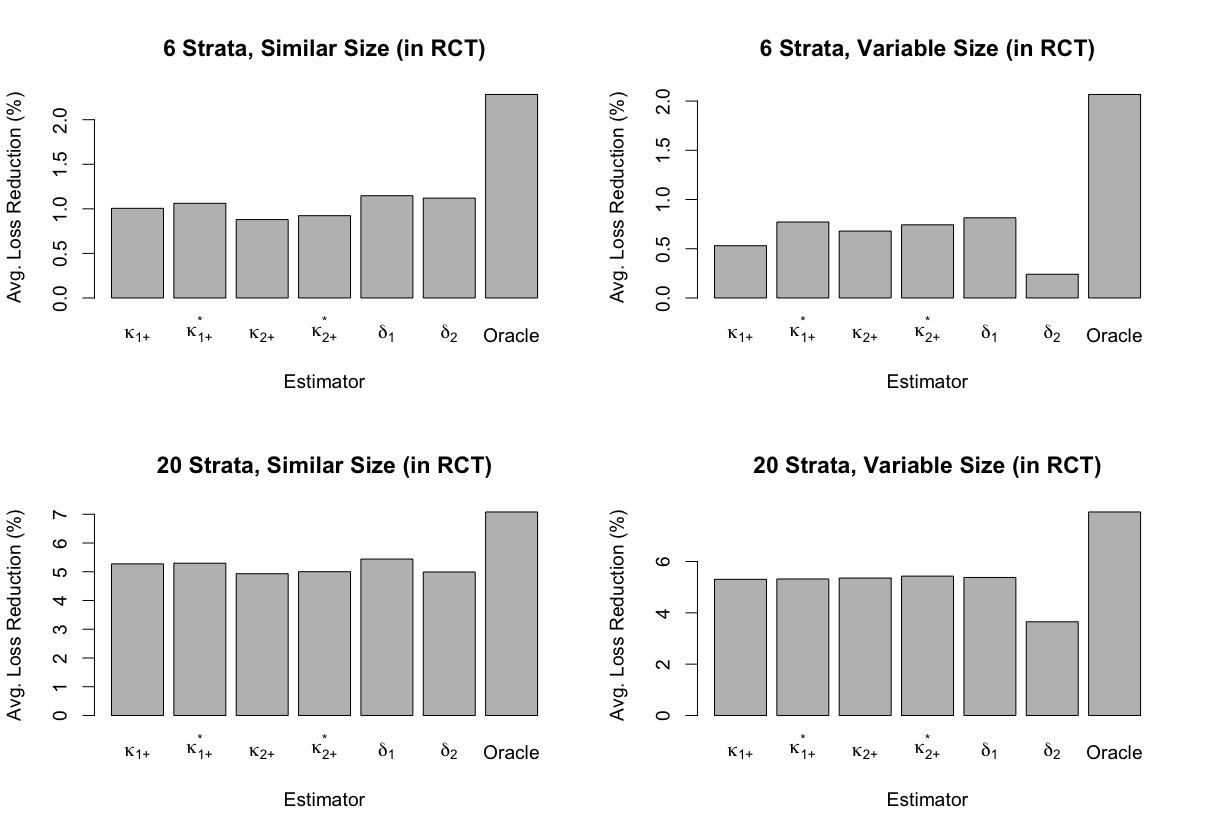}
\caption{\label{fig:sim_Differential_NoIPTW} Percent reduction in risk relative to $\htaur$ for our proposed estimators, the Green and Strawderman estimators, and an oracle under four different conditions. Here, we assume $\htauo$ is computed without any adjustment for selection bias, yielding a highly biased estimator. We also induce different distributions for the covariates $\bsX_i$ among the observational and RCT units.}
\end{figure}

Lastly, we recompute the estimators with stabilized IPW estimation used to compute $\htauo$. The results are given in Figure \ref{fig:sim_Differential_IPTW}. We again have very similar performance between $\bskap_{1+}^{\star}$ and $\bsdelt_1$, with $\bsdelt_1$ modestly edging $\bskap_{1+}^{\star}$ for the lead in each condition. 


\begin{figure}
\centering
\includegraphics[scale = 0.34]{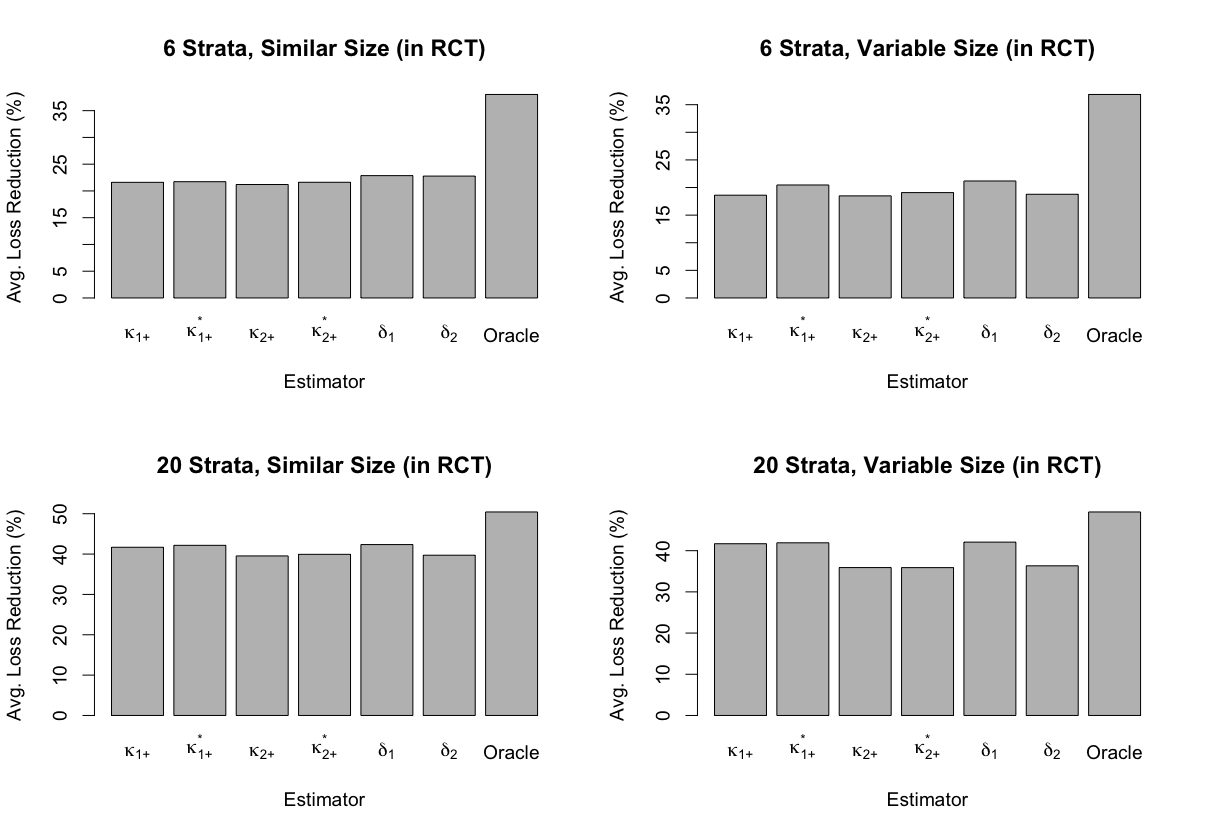}
\caption{\label{fig:sim_Differential_IPTW} Percent reduction in risk relative to $\htaur$ for our proposed estimators, the Green and Strawderman estimators, and an oracle under four different conditions. Here, we assume $\htauo$ is computed by stabilized inverse probability of treatment weighting, such that some of the selection bias is removed. We also induce different distributions for the covariates $\bsX_i$ among the observational and RCT units.}
\end{figure}

\section{Conclusion}\label{sec:conclusions}

There exists a considerable history in the statistics literature of minimizing unbiased risk estimates to obtain tuning parameters. Drawing on this work, we have developed a procedure for deriving shrinkage estimators that trade off between a biased and unbiased estimator of a $K$-dimensional parameter. We first generalize a result from Strawderman in order to obtain an unbiased risk estimate in our setting. We then posit a structure for the tradeoff estimator; derive its functional form by minimizing the unbiased risk estimate; and optionally further optimize to address the penalty we incur for using the data to both estimate the shrinkage factor and the estimand itself. We used this procedure to introduce new estimators, termed $\bskap_{1+}$ and $\bskap_{2+}$, and their analogues, $\bskap_{1+}^{\star}, \bskap_{2+}^{\star}$, for which we proved testable finite-$K$ conditions under which they have lower risk than the unbiased estimator. We also showed that both estimators achieve a notion of optimality in the limit of infinite $K$. 

We are interested in deploying these methods to address a problem in causal inference: how to combine observational and experimental data to estimate causal effects. Observational data is ubiquitous, but because treatment is not randomly assigned, the causal estimates it yields are biased. Propensity score methods can be used to reduce this bias. But it cannot be eliminated unless we are willing to make the burdensome assumption that all confounding variables are measured. By contrast, experimental data yields causal estimates that are unbiased, but often have higher variance, because such data is typically expensive to obtain. Our estimators provide a template for combining these two types of data in order to manage the bias-variance tradeoff and yield lower overall risk.  

For the practical use of these estimators, we develop connections to a sensitivity analysis method proposed by Zhao and co-authors. We also explore the estimators' performance on a simulated dataset in which we seek to quantify weak causal effects using a small experiment and a larger observational study suffering from unmeasured confounding. In the simulations, we compare the performance of our estimators against two estimators proposed by Green and Strawderman, $\bsdelt_1$ and $\bsdelt_2$. We find that all our estimators do better than an estimator derived solely from the experimental data, under all tested conditions. $\bskap_1$ typically achieves lower risk than Green and Strawderman's estimators when the covariate distribution is similar in the observational and experimental data. However, $\bsdelt_1$ is slightly more robust when the distributions differ. 

There are numerous potential extensions to this work. We have explored two shrinkage structures in this text -- shrinkage by a constant factor, and shrinkage by a variance-weighted factor -- but our procedure is general and can be used to derive alternative estimators. We might, for example, incorporate auxiliary information in order to guess which strata suffer from the most bias in the observational study. We could then design our estimator to shrink less toward the observational estimate in these strata. Or, we might be interested in a thresholding approach in which we rely solely on $\hat \tau_{rk}$ in strata $k$ for which $\hat \tau_{rk}$ differs more than some threshold $\Delta$ from $\hat \tau_{ok}$. 

In future work, we will also seek to weaken our assumptions. To establish theoretical results, we have supposed that the strata are well-defined in the observational and experimental datasets, and that average treatment effects are shared within strata across the datasets. In many practical examples, analysts will be interested in heterogeneous treatment effect estimation. They will face a tradeoff between trying to estimate many subgroup treatment effects less accurately versus more stable estimation of fewer effects. Our methods require a minimum of four strata to guarantee a risk reduction, and potentially many more in a case with substantial heteroscedasticity or highly differential weights by stratum. The assumption of shared average treatment effects is also unlikely to hold exactly in practice. We will aim to develop practical guidelines for determining a reasonable set of strata for which to estimate causal effects, and measures of robustness to violations of Assumption \ref{ass:cte}. 

\section*{Acknowledgments}
Evan Rosenman was supported by Google, and by the Department of Defense (DoD) through
the National Defense Science \& Engineering Graduate Fellowship (NDSEG)
Program. This work was also supported by the NSF under grants DMS-1521145, DMS-1407397, and IIS-1837931.

\bibliographystyle{apalike} 
\bibliography{biblio}

\section*{Appendix}
\section{Proof of Theorem \ref{thm:delta1asymptotics}}\label{sec:proofOfDelta1Asymptotics}

The proof proceeds in several stages. We replicate the steps in a similar proof offered in \cite{xie2012sure} to prove that $ \bskap(\lambda_1^{\URE}, \htaur, \htauo)$ is asymptotically optimal. 

\begin{lemma}\label{lemma:uniflossapprox}
Assuming that
\begin{align}\label{conditions}
\limsup_{K \to \infty} \frac{1}{K} \sum_k d_k^2 \sigma_{rk}^2\xi_k^2&< \infty\, \\
\limsup_{K \to \infty} \frac{1}{K} \sum_k d_k^2 \sigma_{rk}^2 \sigma_{ok}^2 &< \infty\, \text{and}\\
\limsup_{K \to \infty} \frac{1}{K} \sum_k d_k^2 \sigma_{rk}^4 &<\infty
\end{align}
we have 
\[ \sup_{0 \leq \lambda \leq 1} \left| \URE(\bstau, \bskap(\lambda, \htaur, \htauo)) - \mathcal{L}(\bstau, \bskap(\lambda, \htaur, \htauo))  \right| \to 0\] 
in $L_2$ and in probability as the dimension $K \to \infty$. 
\end{lemma}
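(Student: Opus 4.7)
The plan hinges on a simple structural observation: as a function of $\lambda$, the residual $\URE(\bstau, \bskap(\lambda, \htaur, \htauo)) - \mathcal{L}(\bstau, \bskap(\lambda, \htaur, \htauo))$ is \emph{affine}. Writing $\bskap - \bstau = (\htaur - \bstau) - \lambda(\htaur - \htauo)$ and expanding both the quadratic loss and the closed-form expression in equation~(\ref{eq:URE.fixed.lambda}), the $\lambda^2$ terms coincide with $\lambda^2 (\htaur - \htauo)^\tran \bsD (\htaur - \htauo)$ and cancel, leaving $\URE - \mathcal{L} = A + 2\lambda B$, where
\begin{align*}
A &= \Tr(\bsSig_r \bsD) - (\htaur - \bstau)^\tran \bsD (\htaur - \bstau), \\
B &= (\htaur - \bstau)^\tran \bsD (\htaur - \htauo) - \Tr(\bsSig_r \bsD).
\end{align*}
The triangle inequality on $\lambda \in [0,1]$ then gives $\sup_\lambda |\URE - \mathcal{L}| \le |A| + 2|B|$, so it suffices to establish $L_2$ convergence of the two scalar quantities $A$ and $B$ to zero.

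Next I would verify that $A$ and $B$ are centered and then bound their variances. Using $\htaur \sim \dnorm(\bstau, \bsSig_r)$, the standard Gaussian quadratic-form identity gives $\E[A] = 0$ and $\Var(A) = 2 \Tr((\bsSig_r \bsD)^2) = (2/K^2) \sum_k d_k^2 \sigma_{rk}^4$, which vanishes under condition~(3): the hypothesis controls $(1/K)\sum_k d_k^2 \sigma_{rk}^4$, so the extra $1/K$ drives it to zero. For $B$, I decompose $\htaur - \htauo = (\htaur - \bstau) - (\htauo - \bstau)$ and use independence of the two estimators together with $\E[\htaur - \bstau] = 0$ to get $\E[B] = 0$. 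The inequality $\Var(B) \le 2\Var((\htaur - \bstau)^\tran \bsD(\htaur - \bstau)) + 2 \Var((\htaur - \bstau)^\tran \bsD (\htauo - \bstau))$ reduces the problem to two pieces: the first reappears from the analysis of $A$, while the second, using diagonality of $\bsD, \bsSig_r, \bsSig_o$ and independence of $\htaur$ and $\htauo$, simplifies to $\sum_k (d_k/K)^2 \sigma_{rk}^2 (\sigma_{ok}^2 + \xi_k^2)$; this vanishes by the joint application of conditions~(1) and~(2).

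Combining these ingredients, $\E[(\sup_\lambda |\URE - \mathcal{L}|)^2] \le 2 \E[A^2] + 8 \E[B^2] \to 0$, which yields $L_2$ convergence; convergence in probability follows immediately by Markov's inequality. The primary obstacle is conceptual rather than computational, namely spotting the affine-in-$\lambda$ cancellation that converts a uniform-in-$\lambda$ problem into two pointwise second-moment bounds; once that is in hand, the remaining arithmetic is routine. A secondary subtlety worth flagging is that the lemma does not assume $\htauo$ is Gaussian, so the variance calculation for the cross term in $B$ must be carried out using only the specified mean $\bstau + \bsxi$ and diagonal covariance $\bsSig_o$ together with independence from $\htaur$, not any joint Gaussianity; diagonality of the covariance matrices is essential for the cross-stratum terms to drop out cleanly.
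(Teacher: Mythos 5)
Your proof is correct, and it uses the same affine-in-$\lambda$ decomposition $\URE - \mathcal{L} = A + 2\lambda B$ that drives the paper's argument (the paper writes these two terms in coordinates as $\frac1K\sum_k d_k(\sigma_{rk}^2 - (\hat\tau_{rk}-\tau_k)^2)$ and $\frac2K\lambda\sum_k d_k(\sigma_{rk}^2 - (\hat\tau_{rk}-\tau_k)(\hat\tau_{rk}-\hat\tau_{ok}))$), with the same second-moment computation for $A$. Where you genuinely diverge is in handling the supremum over $\lambda$: the paper, following \cite{xie2012sure}, majorizes $\sup_{0\le\lambda\le1}|\lambda S|$ by a supremum over monotone coefficient sequences $1\ge c_1\ge\cdots\ge c_K\ge 0$, invokes Lemma 2.1 of \cite{li1986asymptotic} to turn that into a maximum of partial sums $M_j$, checks that $\{M_j\}$ is a martingale, and applies the $L^p$ maximal inequality to get $E(\max_j M_j^2)\le 4E(M_K^2)$. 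You instead observe that for a \emph{common scalar} shrinkage factor the supremum is simply attained at $\lambda=1$, so $\sup_\lambda|A+2\lambda B|\le |A|+2|B|$ and only $E(B^2)$ is needed; your bound $\Var(B)\le 2\Var(A)+2\Var\bigl((\htaur-\bstau)^\tran\bsD(\htauo-\bstau)\bigr)$ reproduces exactly the per-stratum quantities $d_k^2(\sigma_{rk}^2\xi_k^2+\sigma_{rk}^2\sigma_{ok}^2+2\sigma_{rk}^4)$ appearing in the paper's bound on $E(M_K^2)$, so the same three regularity conditions close the argument. Your route is shorter and avoids two external lemmas; the paper's heavier machinery is the one that generalizes to settings (as in Xie et al.) where the shrinkage weights vary monotonically across components, but it is not needed for this lemma. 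One point to make explicit in a final write-up: both arguments silently use independence of $\htaur$ and $\htauo$ (disjoint units, independent randomizations) and the diagonality of $\bsSig_r$ and $\bsSig_o$ to kill the cross-stratum covariances — you flag this, which is good, since $\htauo$ is not assumed Gaussian.
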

\begin{proof}
We show only the $L_2$ convergence, which implies convergence in probability. Assuming $\lambda$ lies in the unit interval, we can write
\begin{align*}
URE(\bstau, \bskap(\lambda, \htaur, \htauo)) - \mathcal{L}(\bstau, \bskap(\lambda, \htaur, \htauo)) &= \frac{1}{K} \left( \sum_k d_k \left( \sigma_{rk}^2 - \left( \hat \tau_{rk} - \tau_k \right)^2 \right) + \right. \\ &2 \left. \lambda \sum_k d_k \left( \sigma_{rk}^2 - \left( \hat \tau_{rk} - \tau_k \right) \left(\hat \tau_{rk} - \hat \tau_{ok} \right) \right)\right)
\end{align*}
and hence 
\begin{align*}
\sup_{0 \leq \lambda \leq 1} \left| \URE(\bstau, \bskap(\lambda, \htaur, \htauo)) - \mathcal{L}(\bstau, \bskap(\lambda, \htaur, \htauo))\right| & \leq  \left| \frac{1}{K} \sum_k d_k \left( \sigma_{rk}^2 - \left( \hat \tau_{rk} - \tau_k \right)^2 \right) \right| -\\ & \hspace{5mm} \sup_{0 \leq \lambda \leq 1} \left| \frac{2}{K} \lambda \sum_k d_k \left( \sigma_{rk}^2 - \left( \hat \tau_{rk} - \tau_k \right) \left(\hat \tau_{rk} - \hat \tau_{ok} \right)\right)\right| 
\end{align*}
We can consider the terms separately. For the first term, we observe
\begin{align*}
E \left( \left( \frac{1}{K} \sum_k d_k \left(\sigma_{rk}^2 - \left( \hat \tau_{rk} - \tau_k \right)^2 \right)\right)^2 \right) &= \frac{1}{K^2} \sum_k d_k^2 E \left( \left( \sigma_{rk}^2 - (\hat \tau_{rk} - \tau_k)\right)^2 \right) \\
&= \frac{1}{K^2} \sum_k d_k^2 \left( \sigma_{rk}^4 - 2 \sigma_{rk}^4 + 3 \sigma_{rk}^4 \right)\\
&= \frac{2}{K^2} \sum_k d_k^2 \sigma_{rk}^4 \to 0 
\end{align*} 
by our third regularity condition. 

For the second term, observe that, in general, 
\small
\begin{align*}
\sup_{0 \leq \lambda \leq 1} \left| \frac{2}{K} \lambda \sum_k d_k \left( \sigma_{rk}^2 - \left( \hat \tau_{rk} - \tau_k \right) \left(\hat \tau_{rk} - \hat \tau_{ok} \right)\right) \right| &\leq \sup_{1 \geq c_1 \geq \dots \geq c_K \geq 0}\frac{2}{K} \left| \sum_k c_k d_k \left( \sigma_{rk}^2 - \left( \hat \tau_{rk} - \tau_k \right) \left(\hat \tau_{rk} - \hat \tau_{ok} \right)\right) \right|  \,.
\end{align*}
\normalsize
Applying Lemma 2.1 from \cite{li1986asymptotic}, we observe 
\small
\[ \sup_{1 \geq c_1 \geq \dots \geq c_K \geq 0}\frac{2}{K} \left| \sum_k c_k d_k \left( \sigma_{rk}^2 - \left( \hat \tau_{rk} - \tau_k \right) \left(\hat \tau_{rk} - \hat \tau_{ok} \right)\right) \right| = \max_{1 \leq j \leq K} \frac{2}{K}\left|  \sum_{k = 1}^j d_k \left(  \sigma_{rk}^2 - \left( \hat \tau_{rk} - \tau_k \right) \left(\hat \tau_{rk} - \hat \tau_{ok} \right) \right) \right| \,. \] 
\normalsize
Observe that, for each value of $k$,
\begin{align*}
E \left(d_k \left( \sigma_{rk}^2 - \left( \hat \tau_{rk} - \tau_k \right) \left(\hat \tau_{rk} - \hat \tau_{ok} \right) \right)\right) &=  d_k \left(\sigma_{rk}^2 - E \left( \hat \tau_{rk}^2 \right) + \tau_k E \left( \hat \tau_{rk} \right) + E\left( \hat \tau_{ok} \hat \tau_{rk} \right) - \tau_k E \left( \hat  \tau_{ok} \right)\right) \\
&= 0\,,
\end{align*}
and thus for $M_j = \sum_{k  = 1}^j  d_k \left( \sigma_{rk}^2 - \left( \hat \tau_{rk} - \tau_k \right) \left(\hat \tau_{rk} - \hat \tau_{ok} \right)\right)$, $\{M_j: j = 1, 2, \dots\}$ forms a martingale. We can then use the $L^p$ maximal inequality to observe
\[ E \left( \max_{1 \leq j \leq K} M_j^2 \right) \leq 4 E \left(M_K^2\right) = 4 \sum_k d_k^2 \left( \sigma_{rk}^2\xi_k^2 + \sigma_{rk}^2 \sigma_{ok}^2 + 2 \sigma_{rk}^4\right) \,.\] 
Our regularity conditions thus guarantee that 
\[ E \left(\max_j \left( \frac{2}{K} M_j^2\right) \right) \to 0 \] 
which tells us 
\[ \sup_{0 \leq \lambda \leq 1} \left| \frac{2}{K} \lambda \sum_k d_k \left( \sigma_{rk}^2 - \left( \hat \tau_{rk} - \tau_k \right) \left(\hat \tau_{rk} - \hat \tau_{ok} \right)\right)\right| \to 0\] 
in $L_2$ as $K \to \infty$. 
\end{proof}

Lemma \ref{lemma:uniflossapprox} tells us that our risk estimate $\URE(\bstau, \bskap(\lambda, \htaur, \htauo))$ is close to the actual loss of our estimator $ \mathcal{L}(\bstau, \bskap(\lambda, \htaur, \htauo))$ as the dimension grows large. It follows that minimizing our risk estimate should yield a competitive estimator. We can formalize this result by considering the risk of any other estimator with a constant shrinkage factor, $\bskap(\lambda, \htaur, \htauo)$.  

\begin{lemma}\label{lemma:bestEstimator}
Assuming Conditions 1-3, we have
\[ \lim_{K \to \infty} \left( R(\bstau, \bskap(\lambda_1^{\URE}, \htaur, \htauo))-R(\bstau, \bskap(\lambda, \htaur, \htauo))\right) \leq 0 \,.\]
for any choice of $\lambda$. 
\end{lemma}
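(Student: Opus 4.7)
The plan is to combine the deterministic optimality of $\lambda_1^{\URE}$ as the minimizer of the URE (which, by equation (\ref{eq:URE.fixed.lambda}), is a strictly convex quadratic in $\lambda$) with the uniform-in-$\lambda$ approximation of the loss by the URE supplied by Lemma \ref{lemma:uniflossapprox}. For every realization of the data and every competing $\lambda \in [0,1]$, the pointwise inequality $\URE(\bstau, \bskap(\lambda_1^{\URE}, \htaur, \htauo)) \le \URE(\bstau, \bskap(\lambda, \htaur, \htauo))$ holds because $\lambda_1^{\URE}$ is the global minimizer of that quadratic.

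Next, I would add and subtract URE on both endpoints to obtain the standard oracle-inequality bound
\begin{align*}
\mathcal{L}(\bstau, \bskap(\lambda_1^{\URE}, \htaur, \htauo))
  - \mathcal{L}(\bstau, \bskap(\lambda, \htaur, \htauo))
&\le \bigl[\mathcal{L}(\bstau, \bskap(\lambda_1^{\URE}, \htaur, \htauo))
        - \URE(\bstau, \bskap(\lambda_1^{\URE}, \htaur, \htauo))\bigr] \\
&\quad + \bigl[\URE(\bstau, \bskap(\lambda_1^{\URE}, \htaur, \htauo))
        - \URE(\bstau, \bskap(\lambda, \htaur, \htauo))\bigr] \\
&\quad + \bigl[\URE(\bstau, \bskap(\lambda, \htaur, \htauo))
        - \mathcal{L}(\bstau, \bskap(\lambda, \htaur, \htauo))\bigr] \\
&\le 2 \sup_{\lambda' \in [0,1]}
   \bigl|\URE(\bstau, \bskap(\lambda', \htaur, \htauo))
       - \mathcal{L}(\bstau, \bskap(\lambda', \htaur, \htauo))\bigr|,
\end{align*}
where the middle bracket is nonpositive by optimality. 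Lemma \ref{lemma:uniflossapprox} shows the right-hand side tends to $0$ in $L^2$, which implies convergence in $L^1$ via Cauchy--Schwarz on the nonnegative supremum ($E[X] \le (E[X^2])^{1/2}$). Taking expectations converts the left-hand side into $R(\bstau, \bskap(\lambda_1^{\URE}, \htaur, \htauo)) - R(\bstau, \bskap(\lambda, \htaur, \htauo))$ and yields the claimed limsup bound.

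The main obstacle is that $\lambda_1^{\URE}$ is random and may fall outside $[0,1]$, whereas Lemma \ref{lemma:uniflossapprox} only supplies uniform approximation on $[0,1]$. The cleanest remedy is to replace $\lambda_1^{\URE}$ throughout by its truncation $\min\bigl((\lambda_1^{\URE})_+,1\bigr)$: this is precisely the shrinkage factor used in $\bskap_{1+}$, and by strict convexity of the URE quadratic in $\lambda$, the truncated value is the true minimizer of $\URE$ over $[0,1]$, so the optimality inequality and the uniform approximation both apply verbatim when the competitor $\lambda$ is restricted to $[0,1]$. A direct alternative is to observe that under conditions (1)--(3), $(\htauo-\htaur)^\tran \bsD(\htauo-\htaur)$ concentrates around $\Tr(\bsSig_r\bsD)+\Tr(\bsSig_o\bsD)+\bsxi^\tran\bsD^2\bsxi \ge \Tr(\bsSig_r\bsD)$, so $\lambda_1^{\URE}$ lies in $(0,1)$ with probability tending to $1$ and the complement contributes only $o(1)$ to the risk difference, which can be absorbed into the remainder.
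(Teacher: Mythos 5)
Your proof is correct and follows essentially the same route as the paper's: the three-term add-and-subtract decomposition, nonpositivity of the middle term by optimality of $\lambda_1^{\URE}$, the bound by twice the uniform supremum from Lemma \ref{lemma:uniflossapprox}, and passage from $L_2$ to $L_1$ convergence before taking expectations. Your additional observation that $\lambda_1^{\URE}$ is random and need not lie in $[0,1]$ --- so that the supremum over $[0,1]$ does not immediately apply --- identifies a step the paper itself glosses over, and your truncation remedy (which reduces the claim to the positive-part estimator $\bskap_{1+}$ actually advertised in Theorem \ref{thm:delta1asymptotics}) is a sound way to close it.
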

\begin{proof}
Observe 
\begin{align*}
\mathcal{L}(\bstau, \bskap(\lambda_1^{\URE}, \htaur, \htauo))-\mathcal{L}(\bstau, \bskap(\lambda, \htaur, \htauo)) &= \left( \mathcal{L}(\bstau, \bskap(\lambda_1^{\URE}, \htaur, \htauo)) - \URE(\bstau, \bskap(\lambda_1^{\URE}, \htaur, \htauo))\right) + \\
& \left( \URE(\bstau, \bskap(\lambda_1^{\URE}, \htaur, \htauo)) - \URE(\bstau, \bskap(\lambda, \htaur, \tau_o))\right) + \\
& \left(  \URE(\bstau, \bskap(\lambda, \htaur, \htauo)) - \mathcal{L}(\bstau, \bskap(\lambda, \htaur, \htauo)) \right) 
\end{align*}
The second term must be negative because $\bskap(\lambda_1^{\URE}, \htaur, \htauo)$ minimizes the unbiased risk estimate among all choices of $\lambda$. Hence, we have
\[ \mathcal{L}(\bstau, \bskap(\lambda_1^{\URE}, \htaur, \htauo))-\mathcal{L}(\bstau, \bskap(\lambda, \htaur, \htauo)) \leq 2 \sup_{0 \leq \lambda' \leq 1} \left| \URE(\bstau, \bskap(\lambda', \htaur, \htauo)) - \mathcal{L}(\bstau, \bskap(\lambda', \htaur, \htauo)) \right| \] 
Taking expectations of both sides yields
\[ \left( R(\bstau, \bskap(\lambda_1^{\URE}, \htaur, \htauo))-R(\bstau, \bskap(\lambda, \htaur, \htauo))\right) \leq 2 E \left( \sup_{0 \leq \lambda' \leq 1} \left| \URE(\bstau, \bskap(\lambda', \htaur, \htauo)) - \mathcal{L}(\bstau, \bskap(\lambda', \htaur, \htauo)) \right|\right)\] 

From Lemma \ref{lemma:uniflossapprox} we know that the term on right hand side goes to 0 in $L_2$ (and thus in $L_1$) as $K \to \infty$. Hence
\[ \lim_{K \to \infty} \left( R(\bstau, \bskap(\lambda_1^{\URE}, \htaur, \htauo))-R(\bstau, \bskap(\lambda, \htaur, \htauo))\right) \leq 0 \] 
as desired. 
\end{proof}

\section{Proof of Theorem \ref{thm:delta2asymptotics}}\label{sec:proofOfDelta2Asymptotics}

The proof is substantively similar to that of Theorem \ref{sec:proofOfDelta1Asymptotics}. 

\begin{lemma}\label{lemma:uniflossapprox2}
Assuming that
\begin{align}\label{conditions}
\limsup_{K \to \infty} \frac{1}{K} \sum_k d_k^2 \sigma_{rk}^6\xi_k^2&< \infty\, \\
\limsup_{K \to \infty} \frac{1}{K} \sum_k d_k^2 \sigma_{rk}^6 \sigma_{ok}^2 &< \infty\, \\
\limsup_{K \to \infty} \frac{1}{K} \sum_k d_k^2 \sigma_{rk}^8 &<\infty \, \text{and}\\
\limsup_{K \to \infty} \frac{1}{K} \sum_k d_k^2 \sigma_{rk}^4 &<\infty
\end{align}
we have 
\[ \sup_{0 \leq \lambda \leq 1} \left| \URE(\bstau, \bskap(\lambda \bsSig_r, \htaur, \htauo)) - \mathcal{L}(\bstau, \bskap(\lambda \bsSig_r, \htaur, \htauo))  \right| \to 0\] 
in $L_2$ and in probability as the dimension $K \to \infty$. 
\end{lemma}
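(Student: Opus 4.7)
The plan is to follow the proof of Lemma \ref{lemma:uniflossapprox} line by line, with the scalar $\lambda$ replaced throughout by the diagonal weighting $\lambda\bsSig_r$. Applying Theorem \ref{thm:estRisk} with $g_k = -\lambda(\hat\tau_{rk} - \hat\tau_{ok})$ and then subtracting from the loss, the difference $\URE - \mathcal{L}$ decomposes as
\[ \frac{1}{K}\sum_k d_k\bigl(\sigma_{rk}^2 - (\hat\tau_{rk} - \tau_k)^2\bigr) + \frac{2\lambda}{K}\sum_k d_k\sigma_{rk}^2\bigl(\sigma_{rk}^2 - (\hat\tau_{rk} - \tau_k)(\hat\tau_{rk} - \hat\tau_{ok})\bigr). \]
Taking $\sup_{0 \le \lambda \le 1}$ reduces the lemma to bounding each of these two summands in $L_2$.

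The first summand does not depend on $\lambda$, and its second moment equals $\frac{2}{K^2}\sum_k d_k^2\sigma_{rk}^4$, which vanishes under the fourth regularity condition exactly as in Lemma \ref{lemma:uniflossapprox}. For the second summand, I would again apply Lemma 2.1 of \cite{li1986asymptotic} to rewrite the supremum over $\lambda \in [0,1]$ as a maximum over partial sums
\[ M_j = \sum_{k=1}^{j} d_k \sigma_{rk}^2 \bigl(\sigma_{rk}^2 - (\hat\tau_{rk}-\tau_k)(\hat\tau_{rk}-\hat\tau_{ok})\bigr), \]
verify via a short moment computation (using normality of $\htaur$ and independence of $\htaur$ and $\htauo$) that $E[(\hat\tau_{rk}-\tau_k)(\hat\tau_{rk}-\hat\tau_{ok})] = \sigma_{rk}^2$ so that $\{M_j\}$ is a martingale, and then apply the $L_2$ maximal inequality. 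A standard variance calculation then yields $E(M_K^2) \le 4\sum_k d_k^2 \sigma_{rk}^4\bigl(2\sigma_{rk}^4 + \sigma_{rk}^2 \sigma_{ok}^2 + \sigma_{rk}^2\xi_k^2\bigr)$, and dividing by $K^2$ invokes precisely the first three regularity conditions to drive this to zero.

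The step that requires the most care is not conceptual but combinatorial: the extra $\sigma_{rk}^2$ factor from the variance-weighted shrinkage inflates every $\sigma_{rk}$ power appearing in the martingale variance bound by four relative to the constant-shrinkage case of Lemma \ref{lemma:uniflossapprox}. This is exactly why conditions (1)--(3) of the present lemma upgrade $\sigma_{rk}^2 \xi_k^2 \mapsto \sigma_{rk}^6\xi_k^2$, $\sigma_{rk}^2\sigma_{ok}^2 \mapsto \sigma_{rk}^6\sigma_{ok}^2$, and $\sigma_{rk}^4 \mapsto \sigma_{rk}^8$, while condition (4) is carried over unchanged to handle the $\lambda$-independent piece. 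Convergence in probability then follows from $L_2$ convergence, completing the proof.
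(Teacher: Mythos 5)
Your proposal is correct and follows essentially the same route as the paper's proof: the identical decomposition of $\URE - \mathcal{L}$ into a $\lambda$-free term (handled by the fourth condition) and a $\lambda$-weighted term converted via Li's Lemma 2.1 into a maximum of martingale partial sums, bounded by the $L^2$ maximal inequality with the extra $\sigma_{rk}^4$ factor absorbed by the first three conditions. The only cosmetic slip is attributing the factor of $4$ to the bound on $E(M_K^2)$ rather than to the maximal inequality $E\bigl(\max_j M_j^2\bigr) \leq 4E(M_K^2)$, which does not affect the argument.
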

\begin{proof}
Again, we show the $L_2$ convergence. Analogous computations to those in the proof of Lemma \ref{lemma:uniflossapprox} show 
\begin{align*}
\sup_{0 \leq \lambda \leq 1} \left| \URE(\bstau, \bskap(\lambda \bsSig_r, \htaur, \htauo)) - \mathcal{L}(\bstau, \bskap(\lambda \bsSig_r, \htaur, \htauo))\right| & \leq  \left| \frac{1}{K} \sum_k d_k \left( \sigma_{rk}^2 - \left( \hat \tau_{rk} - \tau_k \right)^2 \right) \right| +\\ & \hspace{-27mm} \sup_{0 \leq \lambda \leq 1} \left| \frac{2}{K} \lambda \sum_k d_k \sigma_{rk}^2 \left( \sigma_{rk}^2 - \left( \hat \tau_{rk} - \tau_k \right) \left(\hat \tau_{rk} - \hat \tau_{ok} \right)\right)\right| 
\end{align*}
The first term is unchanged from the proof of Lemma \ref{lemma:uniflossapprox}, so we can rely on our final regularity condition to assert its convergence. For the second term, we can use analogous machinery to observe that 
\small
\begin{align*}
\sup_{0 \leq \lambda \leq 1} \left| \frac{2}{K} \lambda \sum_k d_k \sigma_{rk}^2 \left( \sigma_{rk}^2 - \left( \hat \tau_{rk} - \tau_k \right) \left(\hat \tau_{rk} - \hat \tau_{ok} \right)\right) \right| &\leq \max_{1 \leq j \leq K} \frac{2}{K}\left|  \sum_{k = 1}^j d_k\sigma_{rk}^2 \left(  \sigma_{rk}^2 - \left( \hat \tau_{rk} - \tau_k \right) \left(\hat \tau_{rk} - \hat \tau_{ok} \right) \right) \right| \,. 
\end{align*}
\normalsize
Observe that, for each value of $k$,
\begin{align*}
E \left(d_k \sigma_{rk}^2 \left( \sigma_{rk}^2 - \left( \hat \tau_{rk} - \tau_k \right) \left(\hat \tau_{rk} - \hat \tau_{ok} \right) \right)\right) &=  d_k \sigma_{rk}^2 \left(\sigma_{rk}^2 - E \left( \hat \tau_{rk}^2 \right) + \tau_k E \left( \hat \tau_{rk} \right) + E\left( \hat \tau_{ok} \hat \tau_{rk} \right) - \tau_k E \left( \hat  \tau_{ok} \right)\right) \\
&= 0\,,
\end{align*}
and thus for $\tilde M_j = \sum_{k  = 1}^j  d_k \sigma_{rk}^2 \left( \sigma_{rk}^2 - \left( \hat \tau_{rk} - \tau_k \right) \left(\hat \tau_{rk} - \hat \tau_{ok} \right)\right)$, $\{\tilde M_j: j = 1, 2, \dots\}$ forms a martingale. We can then use the $L^p$ maximal inequality to observe
\[ E \left( \max_{1 \leq j \leq K} \tilde M_j^2 \right) \leq 4 E \left(\tilde M_K^2\right) = 4 \sum_k d_k^2 \sigma_{rk}^4 \left( \sigma_{rk}^2\xi_k^2 + \sigma_{rk}^2 \sigma_{ok}^2 + 2 \sigma_{rk}^4\right) \,.\] 
Our first three regularity conditions thus guarantee that 
\[ E \left(\max_j \left( \frac{2}{K} \tilde M_j^2\right) \right) \to 0 \] 
which tells us 
\[ \sup_{0 \leq \lambda \leq 1} \left| \frac{2}{K} \lambda \sum_k d_k \sigma_{rk}^2 \left( \sigma_{rk}^2 - \left( \hat \tau_{rk} - \tau_k \right) \left(\hat \tau_{rk} - \hat \tau_{ok} \right)\right)\right| \to 0\] 
in $L_2$ as $K \to \infty$. 
\end{proof}

\end{document}